\DeclareMathOperator{\lcm}{lcm}%
\newtheorem{theorem}{Theorem}[section]
\newtheorem{lemma}[theorem]{Lemma}
\newtheorem{proposition}[theorem]{Proposition}
\newtheorem{corollary}[theorem]{Corollary}
\newenvironment{remark}[1][Remark]{\begin{trivlist}
\item[\hskip \labelsep {\bfseries #1}]}{\end{trivlist}}
\begin{document}

\title{Trees, Forests, and Stationary States of Quantum Lindblad Systems}

\author{Patrick Rooney}
\email{darraghrooney@gmail.com}
\affiliation{Department of Physics, University of Windsor, ON, N9B 3P4, Canada}

\author{Anthony M. Bloch}
\email{abloch@umich.edu}
\affiliation{Department of Mathematics, University of
Michigan, Ann Arbor, MI 48109}

\author{C. Rangan}
\email{rangan@uwindsor.ca}
\affiliation{Department of Physics, University of Windsor, ON,
N9B 3P4, Canada}

\begin{abstract}
In this paper, we study the stationary orbits of quantum Lindblad systems. We show that they can be characterized in terms of trees and forests on a directed graph with edge weights that depend on the Lindblad operators and the eigenbasis of the density operator. For a certain class of typical Lindblad systems, this characterization can be used to find the asymptotic end-states. There is a unique end-state for each basin of the graph (the strongly connected components with no outgoing edges). In most cases, every asymptotic end-state must be a linear combination thereof, but we prove necessary and sufficient conditions under which symmetry in the Lindblad and Hamiltonian operators hide other end-states or stable oscillations between end-states.
\end{abstract} 

\keywords{quantum mechanics, open systems, Lindblad master equation}

\maketitle

\section{Introduction}

As the interest in quantum computation \cite{NielsenChuangBook} and quantum control \cite{DAlessandroBook} has grown, the obstacle of protecting quantum phenomena from the environment \cite{BreuerPetruccioneBook} has taken a crucial role. 
The Lindblad model of open systems, which models a Markovian evolution \cite{Lindblad76}\cite{GoriniKossakowskiSudarshan76} has been widely studied. The prospect of engineering open quantum systems \cite{LloydViola01}\cite{Baconetal01}\cite{Barreiroetal2011} raises the question of what quantum states can be reached for a given system. Of particular importance are the asymptotic states of a system \cite{Spohn1976}\cite{Frigerio1978}\cite{SchirmerWang2010}\cite{BaumgartnerNarnhofer2012}.

One way to characterize the space of density operators is in terms of its unitary orbits; for example, the pure quantum states constitute one such orbit. The dynamics can be decomposed into inter- and intra-orbit dynamics \cite{us_nis2_b}\cite{us_FBC}. It turns out the vector $\Lambda$ of eigenvalues of the density operator, which indexes the space of orbits, obeys a linear ODE $\dot{\Lambda} = \Omega \Lambda$, where $\Omega$ is a (possibly time-dependent) matrix. $-\Omega$ is a weighted Laplacian matrix: it has non-negative diagonal entries, non-positive off-diagonal entries, and its column-sums vanish. Laplacian matrices are often used in a graph-theoretical  context. A weighted digraph has a corresponding Laplacian matrix where the negative edge weights are the off-diagonal entries, and the diagonal entries are sums of all outgoing (or ingoing) edges for each vertex. Laplacians for unweighted, undirected graphs are well-known for their role in Kirchhoff's Matrix-tree theorem \cite{HarrisHirstBook}, which states that the number of spanning trees of the graph is equal to any $(j,j)$-minor divided by the vertex number. More general matrix-tree theorems can be found; in particular, we will require a version \cite{ChaikenKleitman1978} that relates any principal minor of a weighted, directed Laplacian to the forests on the corresponding graph. 

In this paper, we consider the set of stationary orbits: $\{\Lambda: \Lambda \in \ker (\Omega )\}$. Like the minors of $\Omega$, the stationary space can be characterized by trees and forests of the digraph corresponding to $-\Omega$. While $\Omega$ is time-dependent in general, we show that for a natural class of Lindblad systems, there is a basis for which $\Omega$ is constant. This class is the set of generalized permutation matrices: \emph{i.e.} Lindblad operators that have at most one non-zero entry in every row or column. This includes, for example, diagonal operators, the Pauli matrices, generalized Pauli matrices, annihilation and creation operators, and jump operators in the form $|j\rangle\langle k|$. When the Lindblad operators are generalized permutation matrices, it is possible to find all cases of time-independent cases of $\Omega$, which allows us to characterize all possible end-states of the system in terms of trees and forests. 

Interestingly, \cite{Shietal2016} has recently introduced a so-called quantum Laplacian matrix in terms of networks in open quantum systems. This is a fundamentally different object from the one in this paper: the vertices in their graphs are qubits, while ours are eigenstates of the density operator. Their Laplacian matrix is symmetric, acts on the (complex) Hilbert space, and is  built out of special Lindblad operators that have real values (so that the edge weights are real). Our Laplacian matrix is not necessarily symmetric, acts on the simplex of unitary orbits, and is built out of arbitrary Lindblad operators. 

Section II is devoted to preliminaries: (A) the description of how the dynamics splits into that of the orbits and that of the projectors, (B) a discussion of generalized permutation matrices, (C) an outline of results on the end-states of general Lindblad systems, and finally (D) a description of the necessary graph-theoretic concepts. In section III, we prove that the stationary orbits of a Lindblad system are a linear combination of unique orbits on the basins, each of which has coefficients described by trees on a weighted digraph. In section IV, we show that the constraints dictating the aforementioned linear combination can be described by forests on the same graph. In section V, we prove necessary and sufficient conditions for the existence of hidden asymptotic end-states and oscillations, in the case of generalized permutation matrix Lindblad operators. Finally, in section VI, we give three examples, chosen to illuminate the results of sections III, IV, and V.

\section{Preliminaries}

\subsection{Orbit decomposition}

Our starting point is the Lindblad-von Neumann equation describing Markovian dynamics of an open quantum system:
\begin{align*}
\frac{d}{dt}\rho(t) &= [-iH, \rho(t)] + \mathcal{L}_D(\rho(t)) \\
\mathcal{L}_D(\rho) &:= \sum_{\alpha=1}^N \left( L_\alpha \rho L_\alpha^\dagger - \frac{1}{2} \left( L_\alpha^\dagger L_\alpha\rho + \rho L_\alpha^\dagger L_\alpha \right). \right),
\end{align*}
The density operator $\rho$ describes the quantum state. It is positive semi-definite with trace one, properties which are preserved by the dynamics. The Hamiltonian $H$ describes the internal system dynamics; we will take it to be time-independent. The Lindblad operators $L_\alpha$ describe the interaction with the environment.

The eigenvalues and eigenvectors of $\rho$ have a physical interpretation: an eigenvalue $\lambda_j$ is the probability the system can be found in the corresponding eigenvector $|\psi_j\rangle$. We can write down evolution equations for these quantities \cite{us_FBC}\cite{us_nis2_b}. If $\Lambda$ is a vector of eigenvalues, we have the following linear ODE:
\begin{equation}
\frac{d}{dt}\Lambda = \Omega^\pi \Lambda.  \label{dLamb}
\end{equation}
The matrix $\Omega^\pi$ depends on the eigenprojectors $\pi_j$ of $\rho$:
\begin{align*}
\Omega^\pi_{jk} &:= \left\{ \begin{array}{cc}
w_{jk}^\pi, & j \ne k \\
- \sum_{l\ne k} w_{lk}^\pi, &  j = k  .
\end{array} \right. \\
w_{jk}^\pi &:= \sum_{\alpha=1}^N Tr\left(\pi_jL_\alpha\pi_k\right) \\
\rho &= \sum_{j=1}^n \lambda_j \pi_j.
\end{align*}
Note that the column-sums of $\Omega^\pi$ are zero, which is due to the requirement that $\sum_j\lambda_j = 1$. 

The eigenprojectors obey the following equation: 
\begin{align}
\frac{d}{dt}\pi_j =  & \sum_{k \ne j}\Bigg(
i\left(\pi_jH\pi_k - \pi_k H\pi_j\right)  \nonumber \\
& + \sum_{l=1}^n\frac{\lambda_l}{\lambda_k-\lambda_j} \left( \pi_j\mathcal{L}_D(\pi_l)\pi_k + \pi_k\mathcal{L}_D(\pi_l )\pi_j \right)  \Bigg).\label{tpderiv}
\end{align}

We should make a few technical remarks. First, we are treating $\Lambda$ as a vector so that we can write down the linear ODE. In order to do this, we need to choose an ordering of the eigenvalues. The naive way to do this is to enforce $\lambda_j \ge \lambda_{j+1}$ (or something of the kind). However, this results in discontinuities in $\dot{\Lambda}$ and $\pi_j$ at eigenvalue crossings. Instead, if we pick an initial ordering at $t=0$, the future trajectory of $\Lambda$ is unique if we demand differentiability of $\Lambda$ at all future times \cite{KatoBook}.

Even after resolving the ambiguity in $\Lambda$, there remains the question of defining $\pi_j$ at eigenvalue crossings. Notice that the projector ODE is undefined at crossings as well. We require the $\pi_j$'s to be continuous, which resolves the ambiguity. From \cite{KatoBook}, the projectors will not necessarily be differentiable at crossings. However, the sum of the crossing projectors will be, and one can write down the corresponding derivative formula, where $\pi_j$ is replaced by the higher-rank projector corresponding to the crossing eigenvalues \cite{us_FBC}. In this paper, we will largely ignore these technicalities.

We describe $\Lambda$ as a vector, but it does not live on a vector space. Rather it occupies the $n$-simplex $\mathcal{T} := \{(\lambda_1,\dots,\lambda_n): 0 \le \lambda_j \le 1, \sum_j \lambda_j = 1 \}$, which of course is imbedded in the vector space $\subset \mathbb{R}^n$. It is a manifold with boundary, but there are no difficulties at the boundary, as the Lindblad-von Neumann equation ensures trajectories do not leave the simplex.

\subsection{GPM-Lindblad operators}

In this paper we derive a formula for the stationary orbit(s) of $\rho$, for which $\frac{d}{dt}\Lambda = 0$. By ``orbit", we mean a point in the simplex $\mathcal{T}$, since the unitary orbits of the possible density operators can be identified with the vectors $\Lambda$ modulo re-ordering. 

By ``stationary", we mean \emph{instantaneously} stationary: \emph{i.e.} $\Lambda$ such that $\dot{\Lambda} = 0$. The formula that we prove holds for all Lindblad systems, but since there is clear interdependence between $\Lambda$ and the projectors $\pi_j$, it will only give \emph{persistently} stationary orbits (\emph{i.e.} constant solutions to the Lindblad-von Neumann equation) if we have stationary eigenprojectors $\pi_j$ as well. It is difficult to determine such projectors in the general case.

However, for a natural class of Lindblad operators that we refer to as GPM-Lindblad operators, it is possible to determine the stationary eigenprojectors. GPM stands for generalized permutation matrix: a GPM is a matrix where there is at most one non-zero element per column and row.  This class includes several types of Lindblad operators that are used to model open systems: the Pauli operators, the extended Pauli operators, the creation/annihilation operators, all jump operators in the form $|j\rangle\langle k|$, and of course all diagonal matrices.

We must fix an eigenbasis of the Hamiltonian $\{ |j\rangle: j = 1,\dots, n\}$, and in this basis the Lindblad operators will have the following structure: $L_\alpha = A_\alpha D_\alpha$, where the $A_\alpha$'s are permutation matrices and the $D_\alpha$'s are diagonal matrices, with possibly complex diagonal elements. 

We will establish the following notation, which we will need later. For every $L_\alpha$, let $\sigma_\alpha\in S_n$ be the permutation corresponding to $A_\alpha$. Every permutation consists of disjoint simple cycles. Let $M_{\alpha}$ be the number of cycles in $\sigma_\alpha$. Let $N_{\alpha,\gamma}\subseteq \mathbb{Z}_n$ be the domain of the $\gamma$th cycle, so that $\{ N_{\alpha,\gamma}: \gamma = 1,\dots, M_\alpha \}$ is a partition of $\mathbb{Z}_n$, and let $\sigma_{\alpha,\gamma}$ be the restriction of $\sigma_\alpha$ to $N_{\alpha,\gamma}$. Let $m_{\alpha,\gamma} = |N_{\alpha,\gamma}|$ be the length of the cycle in question. We will occasionally suppress the $\alpha$ subscript when it is not important to the discussion.

GPM's typically are required to have \emph{exactly} one nonzero entry per row or column. We have relaxed this requirement to allow columns and rows of all zeros, since we want to include operators like $|j\rangle\langle k|$, $j\ne k$. This however introduces an ambiguity in the choice of $AD$-decomposition that we want to eliminate.
\begin{proposition}\label{GPMzeros}
Let $n_z = n- \textrm{rank}(L)$ be the the number of empty columns (and rows) of a GPM $L$. The decomposition $L=AD$ is unique if and only if $n_z \le 1$. Otherwise, there are $n_z!$ ways to choose $A$ (whereas $D$ is always unique). However, if one imposes the rule that, for any $\gamma$, the set $\{j: j \in N_\gamma, D_j =0 \}$ has at most one element, then the choice of $A$ is unique. It is always possible to impose this rule.
\end{proposition}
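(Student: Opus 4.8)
The plan is to make the $AD$-decomposition completely explicit and then reduce everything to a bookkeeping statement about cycles of permutations. First I would note that a GPM $L$ of rank $r$ has exactly $r$ nonzero entries, no two of which share a row or a column; hence $L$ has exactly $n_z = n-r$ empty rows and exactly $n_z$ empty columns. Writing $A$ for the permutation matrix of a permutation $\sigma \in S_n$ and $D = \mathrm{diag}(d_1,\dots,d_n)$, one computes that the only possibly-nonzero entry of $AD$ in column $j$ sits in row $\sigma(j)$ and equals $d_j$. Comparing with $L$ immediately gives: $d_j$ is the (unique) nonzero entry of column $j$ of $L$ when that column is nonzero, and $d_j = 0$ otherwise --- so $D$ is forced and always unique. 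Likewise $\sigma(j)$ is forced, being the row index of that nonzero entry, for every $j$ lying in a nonzero column of $L$; and, since $\sigma$ is a bijection, it must send the set $Z$ of empty columns onto the set $Z'$ of empty rows. Conversely, any bijection $Z \to Z'$ extends the forced part to a valid $\sigma$, and there are exactly $|Z|! = n_z!$ of them. This already yields the first two sentences: $A$ is unique precisely when $n_z! = 1$, i.e.\ when $n_z \le 1$.

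For the refined uniqueness statement I would analyze how the choice of the bijection $g := \sigma|_Z : Z \to Z'$ affects the cycle structure of $\sigma$. Consider the ``forced'' functional digraph with edges $j \to \sigma(j)$ for each $j$ in a nonzero column of $L$. Since forced edges leave only nonzero columns and enter only nonzero rows, its connected pieces are either directed cycles lying entirely inside $\mathbb{Z}_n \setminus (Z \cup Z')$ --- these are already complete cycles of $\sigma$, containing no index with $d_j = 0$, independently of $g$ --- or directed paths. A short in-degree/out-degree count shows each maximal forced path has its terminal vertex in $Z$ and its initial vertex in $Z'$, and that conversely every element of $Z$ (resp.\ $Z'$) is the terminal (resp.\ initial) vertex of exactly one such path; hence there are exactly $n_z$ forced paths $p_1,\dots,p_{n_z}$, each containing exactly one index with $d_j = 0$, namely its terminal vertex. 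Choosing $g$ then amounts to choosing a permutation $\pi \in S_{n_z}$ that glues the tail of $p_i$ to the head of $p_{\pi(i)}$, and a length-$m$ cycle of $\pi$ fuses $m$ of these paths into a single cycle of $\sigma$ carrying exactly $m$ indices with $d_j = 0$.

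Consequently the imposed rule --- every cycle $N_\gamma$ of $\sigma$ contains at most one index $j$ with $D_j = 0$ --- is equivalent to demanding that every cycle of $\pi$ have length at most one, i.e.\ $\pi = \mathrm{id}$. Since $\pi = \mathrm{id}$ is always available (it closes each path $p_i$ into its own cycle, which indeed carries exactly one zero index) and is the only permutation satisfying the constraint, the rule can always be imposed and then pins down $g$, hence $\sigma$, hence $A$, uniquely. I expect the only delicate point to be the degree-count bookkeeping of the middle paragraph --- verifying precisely that the forced digraph splits into zero-free cycles together with exactly $n_z$ paths, each having one endpoint in $Z$ and one in $Z'$ --- together with keeping the $\sigma$ versus $\sigma^{-1}$ and row-versus-column conventions straight; everything after that is immediate.
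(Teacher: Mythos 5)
Your proof is correct and follows essentially the same route as the paper's: both decompose the forced part of $L$ (the nonzero columns) into zero-free cycles and paths whose terminal vertices carry the zeros of $D$, and both observe that closing each path into its own cycle is the canonical choice. If anything, your version is more complete, since you explicitly verify the $n_z!$ count via bijections $Z\to Z'$ and prove that the one-zero-per-cycle rule forces $\pi=\mathrm{id}$ (hence that the canonical choice is the \emph{only} one satisfying the rule), two points the paper's proof asserts but does not spell out.
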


\begin{proof}
$D$ is always unique: the $j$th diagonal element of $D$ is the nonzero element in the $j$th column of $L$ if said column is not empty. Otherwise, the relevant diagonal entry in $D$ is zero. 

Now let $g(L)$ be the digraph where there is an edge $\overleftarrow{jk}$ if and only if $L_{jk}\ne 0$ and $j\ne k$. Since there is at most one nonzero element per row or column, every node has at most one ingoing edge, and one outgoing edge. Each component of $g(L)$ must be an isolated node, a simple cycle (every edge has one outgoing and one ingoing edge), or a unary tree (one root, every node except the last has one daughter). Now construct the permutation $\sigma(g(L))$ as follows. For every isolated node $k$, include the $1$-cycle $(k)$ in $\sigma(g(L))$. For a simple cycle with edges $\overleftarrow{k_1k_2}$, $\dots$ $\overleftarrow{k_{l-1}k_l}$, $\overleftarrow{k_lk_1}$, assign the $l$-cycle $(k_1k_2\dots k_l)$. For the unary tree with edges, $\overleftarrow{k_1k_2}$, $\dots$ $\overleftarrow{k_{l-1}k_l}$, also assign the $l$-cycle $(k_1k_2\dots k_l)$. This permutation is well-defined, since every node is accounted for in some cycle, and all cycles are disjoint. 

Since the permutation is well-defined, the choice of $A$ corresponding to it is unique. Moreover, there can only one zero element in $D$ for each cycle. The isolated nodes correspond to $1$-cycles, so there is no room for more than one zero. The simple cycles in $g(L)$ correspond to cycles with no zeros, and the unary cycles in $g(L)$ correspond to precisely one zero (the missing edge from root to daughterless node).
\end{proof}
When the Lindblad operators are GPM's, the eigenprojectors $\pi_j^D: |j\rangle\langle j|$, $j=1, \dots, n$ are stationary. $H$ is diagonal in this basis, so $\pi^D_j H\pi^D_k=\pi_kH\pi_j^D=0$ for $j\ne k$. The Lindblad super-operator is also diagonal:
\begin{align*}
\mathcal{L}(\pi_j^D) &= \sum_{\alpha=1}^N |D_{\alpha,j}|^2 \left(\pi_{\sigma_{\alpha}(j)}^D - \pi_j^D \right).
\end{align*}
If follows that any $\rho$ that is diagonal remains diagonal. The question of whether there are other stationary eigenprojectors, we leave until section V: the short answer is that, under certain symmetries in the Lindblad operators, there are; otherwise, the diagonal projectors are the only stationary ones.

\subsection{Stationary states of Lindblad systems}

Work has been done by Baumgartner and Narnhofer \cite{BaumgartnerNarnhofer2008}\cite{BaumgartnerNarnhofer2012} regarding the structure of the asymptotic behavior of general Lindblad systems, which we summarize here. The Hilbert space $\mathcal{H}$ can always be decomposed into a direct sum:
\begin{align*}
\mathcal{H} &= \mathcal{D} \oplus \mathcal{R} \\
\mathcal{R} &= \bigoplus_{j} \mathcal{R}_j,
\end{align*}
where the first decomposition is unique. $\mathcal{D}$ is called the decay space: $P_\mathcal{D}\rho(t)P_\mathcal{D} \rightarrow 0$ for any initial condition (where $P_\mathcal{A}$ represents a projector onto a subspace $\mathcal{A}$). The decay space is also maximal, in that there is no sector of $\mathcal{R}$ where the population in that sector will vanish asymptotically for all initial states.

The spaces $\mathcal{R}_j$ are called dissipation blocks: if $\rho(0) = P_{\mathcal{R}_j}\rho(0)P_{\mathcal{R}_j}$, then $\rho(t)$ approaches a unique operator $\rho_{\infty,j}$ that is full-rank on $\mathcal{R}_j$. The projector $P_{\mathcal{R}_j}$ must commute with $P_\mathcal{R}HP_\mathcal{R}$ as well as all $P_\mathcal{R}L_\alpha P_\mathcal{R}$. Any sum of dissipation blocks is an \emph{enclosure}: if an initial density operator has support within any enclosure, its trajectory at all forward times will also have support within that space.

If there are two or more blocks that are unitarily equivalent, then either the decomposition of these blocks is not unique, or there can be oscillations between the two blocks. By ``unitary equivalence", we mean there exist constants $h_j$ and $h_k$, and a matrix $U$ that obeys $U^\dagger U = P_{\mathcal{R}_j}$, $U U^\dagger = P_{\mathcal{R}_k}$, such that $U+U^\dagger$ commutes with $P_{\mathcal{R}}HP_{\mathcal{R}} - h_jP_{\mathcal{R}_j} - h_kP_{\mathcal{R}_k}$ and $P_{\mathcal{R}}L_\alpha P_{\mathcal{R}}$, $\forall \alpha: \alpha = 1, \dots, N$. If this property exists, then:
\begin{enumerate}
\item If $h_j = h_k$, there is a continuous transformation of dissipation blocks. Any matrix $\tilde{\rho}\otimes\rho_{\infty}$ is a possible asymptotic state, where $\tilde{\rho}$ is an arbitrary density operator on $\mathbb{C}^2$, and $\rho_\infty$ is unitarily equivalent to $\rho_{\infty,j}$ and $\rho_{\infty,k}$.
\item If $h_j \ne h_k$, there are stable oscillations between dissipation blocks. The trajectories $R(-\omega_{jk}t)\tilde{\rho} R(\omega_{jk}t)\otimes\rho_\infty$ obey the Lindblad-von Neumann equation, where $\tilde{\rho}$ is an arbitrary density operator on $\mathbb{C}^2$, $R(\theta) := diag(1,e^{i\theta})$, and $\omega_{jk}:=h_j-h_k$. 
\end{enumerate}

Any initial state of the system must approach a linear combination of dissipation states $\rho_{\infty,j}$ and/or oscillations between unitary equivalent dissipation blocks.

\subsection{Trees and Forests}

We now move to graph-theoretical concepts. In this paper, we will connect stationary values of $\Lambda$ to rooted trees and forests on a weighted directed graph. A \emph{tree} is an undirected connected graph without cycles (\emph{i.e.} closed paths). It necessarily has $|V|$ edges, where $V$ is the vertex set. A \emph{rooted} tree is a directed graph whose underlying undirected graph is a tree, and where all edges are oriented towards one particular vertex $r$, called the root. That is, all vertices save $r$ have precisely one outgoing edge, while the root has only incoming edges.  We define a \emph{forest} to be an undirected graph whose connected components are all trees on their respective vertex sets, while a \emph{forest with roots} is a directed graph whose (weakly) connected components are rooted trees. A useful fact regarding trees is Cayley's formula \cite{Cayley}, which states that the possible number of trees on $n$ vertices is $n^{n-2}$. The number of rooted trees with a specified root is then $n^{n-2}$, and the total number of rooted trees with arbitrary root is $n^{n-1}$. Let $T_r(G)$ be the set of sub-trees of a graph $G$ with root $r$ and let $T_{V'}(G)$ be the set of forests whose set of roots is $V' \subseteq V$. These sets may be empty if $G$ is not sufficiently connected.

For a Lindblad system on an $n$-dimensional Hilbert space, we define $G^\pi_\Omega$ to be the graph with $n$ vertices $V(G^\pi_\Omega)=\{v_j: j = 1,2,\dots,n\}$, each vertex corresponding to one eigenprojector $\pi_j$. For every non-zero $w_{jk}^\pi$, construct an edge $e = \overleftarrow{jk}$ with weight $w(e):= w_{jk}^\pi$. We will suppress the superscript $\pi$ henceforth, leaving the dependence on $\pi$ implicit, in order to reduce clutter.

We also assign weights to trees and forests. For any tree or forest $\tau$, define:
\begin{equation*}
W(\tau) := \prod_{e \in \tau} w(e).
\end{equation*}
Also define $W(\tau) = 1$ if $\tau$ has no edges (\emph{i.e.} each component has only one vertex). It turns out that stationary orbits depend only on the weights of sub-trees. If $n=3$ for example, products such as $w_{12}w_{13}$ and $w_{12}w_{23}$ contribute, but products such as $w_{12}w_{21}$ and $w_{12}w_{32}$ do not, since their indices do not correspond to a tree. 

We define a \emph{basin} of $G_\Omega$ to be any strongly connected component with no outgoing edges to its complement. We will denote the basins $G^{(\eta)}_\Omega$, $\eta=1, \dots, n_B$. Let $N_\eta:= \mathcal{V}(G_\Omega^{(\eta)})$ and 
$N_B:= \cup_{\eta=1}^{n_B} N_\eta$. Let $T_B(G_\Omega)$ be the set of all forests where each member tree is rooted in $N_B(G_\Omega)$. It should be clear that $n_B \ge 1$ regardless of connectivity. In a GPM-system, it turns out that the basins usually (but not always) correspond to the dissipation blocks $\mathcal{R}_j$, and the complement of $N_B$ always corresponds to the decay space $\mathcal{D}$. 

The connection between graphs and linear algebra can be seen with the \emph{matrix-tree theorem} for weighted digraphs \cite{ChaikenKleitman1978}, which we will need in our first proof. Chaiken \cite{Chaiken1982} has given a general theorem for all possible minors of a Laplacian matrix, but we only require a simpler version that only applies to principal minors:
\begin{theorem}
Let $I\subset V(G_\Omega)$, and let $\Omega(-I)$ be the matrix obtained by removing the rows and columns of $\Omega$ corresponding to $I$. Note that $\Omega(-I)$ is not necessarily Laplacian. Then $\det(\Omega(-I)) = (-1)^{|G|-|I|}\sum_{\tau\in T_I(G_\Omega)} W(\tau)$.
\end{theorem}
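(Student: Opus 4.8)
The plan is to prove this matrix-tree theorem by induction on $|V(G_\Omega)| - |I|$, the size of the submatrix, exploiting the fact that $\Omega$ is a column-sum-zero (Laplacian) matrix so that every column of $\Omega(-I)$ is ``almost'' a Laplacian column, missing only the contributions routed to vertices in $I$. The base case $|V| - |I| = 1$ is a single diagonal entry $\Omega_{jj} = -\sum_{l \ne j} w_{lj}$; the forests rooted in $I = V \setminus \{j\}$ consist of the single vertex $v_j$ plus, for each other vertex, a choice of outgoing edge leading (eventually) into $I$ — but since every vertex except $j$ is already a root, such a forest is just the edgeless forest on $\{v_j\}$ together with \ldots actually one must check that $T_I(G_\Omega)$ here is the set of single edges $\overleftarrow{lj}$ from $v_j$ to some $l \in I$, giving $\sum_l w_{lj} = -\Omega_{jj}$, matching $(-1)^{1}\sum W(\tau)$. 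I would set up the bookkeeping carefully here since the sign $(-1)^{|G|-|I|}$ and the orientation convention for rooted trees (edges point \emph{toward} the root) must be pinned down consistently.

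For the inductive step, I would pick a vertex $v_j \notin I$ and perform a cofactor-style expansion along the column of $\Omega(-I)$ indexed by $j$. Using column-sum-zero, I rewrite the diagonal entry $\Omega_{jj} = -\sum_{l \ne j} w_{lj} = -\sum_{l \notin I, l \ne j} w_{lj} - \sum_{l \in I} w_{lj}$, so the $j$-column splits into a part supported on the surviving indices (which sums to $-$ the off-diagonal surviving entries) plus a ``leak'' term $-\sum_{l\in I} w_{lj}$ sitting on the diagonal. Expanding the determinant multilinearly in this column: the leak term contributes $-\big(\sum_{l\in I}w_{lj}\big)\det(\Omega(-(I\cup\{j\})))$, which by the inductive hypothesis counts forests on $V \setminus (I \cup \{j\})$ rooted in $I \cup \{j\}$, and attaching the edge $\overleftarrow{l j}$ (from $v_j$ to $l \in I$) upgrades such a forest to one on $V \setminus I$ rooted in $I$ in which $v_j$ is a leaf-most vertex with its outgoing edge going directly to $I$. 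The remaining part of the column, when expanded, must be reorganized — this is the classical step — by adding the $j$-column into another column or by a deletion–contraction argument on the edge out of $v_j$, to account for forests where $v_j$'s outgoing edge goes to another non-root vertex. Tracking the signs through the Laplace expansion so that the $(-1)^{|G|-|I|}$ emerges correctly is the delicate part.

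The cleanest route is probably a deletion–contraction induction on the number of edges directed out of $v_j$ rather than (or in addition to) the vertex-count induction: singling out one edge $e = \overleftarrow{kj}$ out of $v_j$, write $\Omega = \Omega|_{w(e) = 0} + (\text{rank-one correction concentrated in column } j)$, apply multilinearity of the determinant in column $j$, identify the $w(e)$-free piece with $\Omega$ for the graph $G \setminus e$ and the correction with the determinant of a matrix that, after a row/column manipulation identifying vertices $k$ and $j$, equals $\Omega$ for the contracted graph $G / e$ with $j$ absorbed into $I$ — giving $\det \Omega(-I)[G] = \det\Omega(-I)[G\setminus e] + w(e)\cdot(\pm)\det\Omega(-(I\cup\{j\}))[G/e]$. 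Matched against the forest side, deleting $e$ removes forests using $e$, and the contraction term supplies exactly those. I expect the main obstacle to be neither the algebra nor the combinatorics individually but their interface: verifying that the contraction operation on the matrix side (which merges an index) corresponds precisely to the combinatorial contraction on the forest side, and that the sign bookkeeping $(-1)^{|G|-|I|}$ is preserved under both the deletion and the contraction moves simultaneously. An alternative that sidesteps some sign trouble is to cite Chaiken's general all-minors theorem \cite{Chaiken1982} directly and specialize: for a principal minor the permutation relating row-index set to column-index set is the identity, its sign is $+1$, and the general statement collapses to the stated formula; I would mention this as the quick proof and give the self-contained induction as the detailed one.
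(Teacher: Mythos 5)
First, note that the paper does not actually prove this statement: it is quoted as a known result, citing Chaiken--Kleitman \cite{ChaikenKleitman1978} and Chaiken's all-minors theorem \cite{Chaiken1982}. So your closing suggestion --- invoke Chaiken's general theorem and observe that for a principal minor the row/column matching permutation is the identity, so the formula collapses to the stated one with the sign $(-1)^{|G|-|I|}$ coming from $\Omega$ being the \emph{negative} of a Laplacian on the $|G|-|I|$ surviving indices --- coincides exactly with the paper's treatment and is sufficient for the paper's purposes.

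Your self-contained induction is a genuinely different and more ambitious route, and its skeleton is sound: the base case is right (the forests rooted at $V\setminus\{j\}$ are exactly the single edges $\overleftarrow{lj}$, giving $\sum_{l\ne j}w_{lj}=-\Omega_{jj}$), and the ``leak'' term $-\bigl(\sum_{l\in I}w_{lj}\bigr)\det\bigl(\Omega(-(I\cup\{j\}))\bigr)$ correctly accounts, via the inductive hypothesis, for the forests in which $v_j$'s outgoing edge lands directly in $I$. But the argument stops at its crux. The remaining piece of column $j$ --- the column-sum-zero part supported on $V\setminus I$ --- does not reduce to a smaller instance of the same statement, so ``reorganize by adding the $j$-column into another column'' is a placeholder, not a step; the standard ways to finish are a sign-reversing involution on the Leibniz expansion (cancelling the non-forest functional digraphs) or a genuine deletion--contraction recursion. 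On the latter, your displayed identity is not quite correct: for an edge $e=\overleftarrow{kj}$ with $k\notin I$, contracting $e$ merges $v_j$ into $v_k$ (rerouting $v_j$'s incoming edges to $v_k$), it does not absorb $v_j$ into the root set, so the contraction term is not $\det\Omega(-(I\cup\{j\}))$ of the original graph but the corresponding minor of the Laplacian of $G/e$, whose weights differ from those of $G$. Only when $k\in I$ does contraction coincide with enlarging the root set --- and that is precisely the leak term you already have. To complete the proof you must either carry out the involution, or state and verify the contraction identity separately for $k\in I$ and $k\notin I$ with the sign tracked through the merge; otherwise, cite Chaiken as the paper does.
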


\section{Stationary Orbits}

Our task in this section is to describe the kernel of $\Omega$, and the first step is to identify the rank of $\Omega$:

\begin{proposition}
The matrix $\Omega$ has rank $n-n_B$, and the kernel has dimension $n_B$.
\end{proposition}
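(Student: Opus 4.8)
The plan is to use the matrix-tree theorem (Theorem above) applied to principal minors of $\Omega$, combined with the structure of the basins, to pin down the rank. First I would observe that since the column-sums of $\Omega$ vanish, $\Omega$ is singular, so $\mathrm{rank}(\Omega) \le n-1$; more generally, any set of rows indexed by a union of basins together with everything ``below'' it will be linearly dependent, so we expect corank at least $n_B$. To get the lower bound on the rank, I would exhibit an $(n-n_B)\times(n-n_B)$ principal minor that is nonzero. Let $I = N_B$ be the union of all basin vertex sets, with $|I| = \sum_\eta |N_\eta|$; wait --- that is not $n-n_B$ in general, so instead the right choice is to delete exactly one vertex from each basin, i.e. pick $r_\eta \in N_\eta$ for each $\eta$ and set $I = \{r_1,\dots,r_{n_B}\}$, so $|I| = n_B$ and $\Omega(-I)$ is $(n-n_B)\times(n-n_B)$. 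By the matrix-tree theorem, $\det(\Omega(-I)) = (-1)^{(n)-n_B}\sum_{\tau \in T_I(G_\Omega)} W(\tau)$, a sum over spanning forests of $G_\Omega$ whose roots are exactly $r_1,\dots,r_{n_B}$.

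The key step is then to show this forest sum is nonzero. Here I would argue that $T_I(G_\Omega)$ is nonempty: from every vertex there is a directed path into some basin (follow outgoing edges; since the digraph is finite, a path must eventually enter a strongly connected sink component, which is by definition a basin), and within each basin, being strongly connected, there is a spanning in-tree rooted at the chosen $r_\eta$. Assembling these gives a spanning forest rooted at $I$. The delicate point --- and the main obstacle --- is that the forest sum could in principle vanish due to cancellation among terms of differing sign, since the edge weights $w_{jk}$ need not be positive (they are traces $\mathrm{Tr}(\pi_j L_\alpha \pi_k)$ summed over $\alpha$, hence generically complex). I would handle this by noting that $-\Omega$ is a genuine weighted Laplacian with the sign structure of a generator (nonnegative off-diagonal rates in the transposed/column picture), so in fact the relevant minors are, up to the stated sign $(-1)^{n-n_B}$, sums of products of \emph{nonnegative} quantities: each $w_{jk}$ appearing as an actual edge weight is $\ge 0$ because it is a sum of $\mathrm{Tr}(\pi_j L_\alpha \pi_k) = \sum_\alpha |\langle\psi_j| (\text{appropriate vector})|^2$-type nonnegative terms (this is exactly the positivity that makes $\Omega$ a legitimate rate matrix / the generator of a stochastic-like semigroup on $\mathcal{T}$). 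Consequently every $W(\tau)\ge 0$, the sum is strictly positive, and $\det(\Omega(-I))\ne 0$, giving $\mathrm{rank}(\Omega)\ge n-n_B$.

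For the matching upper bound $\mathrm{rank}(\Omega)\le n-n_B$, I would show $\dim\ker(\Omega)\ge n_B$ by constructing $n_B$ linearly independent stationary vectors: for each basin $\eta$, restrict attention to the enclosure generated by $N_\eta$ together with all vertices that can only flow into basins (using the decay/recurrence decomposition), and produce a $\Lambda^{(\eta)}$ supported so that its ``mass'' sits on $N_\eta$ and its pullback along in-paths; concretely $\Lambda^{(\eta)}_k = \sum_{\tau\in T_{\{k\}}(G_\Omega^{(\eta)}\text{-reachable part})} W(\tau)$ or, more cleanly, take $\Lambda^{(\eta)}$ to be the unique (up to scale) stationary distribution of the sub-generator obtained by sending all mass into basin $\eta$. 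These are independent because $\Lambda^{(\eta)}$ has support meeting $N_\eta$ but disjoint from $N_{\eta'}$ for $\eta'\ne\eta$ on the basin vertices. Combining the two bounds yields $\mathrm{rank}(\Omega)=n-n_B$ and $\dim\ker(\Omega)=n_B$. I expect the bookkeeping around which vertices lie in which enclosure --- and verifying the forest sum over the reachable sub-digraph is positive rather than merely nonnegative --- to be the fiddly part, but the positivity of the edge weights makes it go through without sign cancellation.
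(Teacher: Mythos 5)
Your proposal is essentially correct, but it reaches the conclusion by a different route than the paper. The paper block-triangularizes $\Omega$ along the condensation of $G_\Omega$ into strongly connected components and computes the rank of each diagonal block separately: basin blocks are Laplacian and have corank exactly one (a single-vertex-deleted minor is a nonzero tree sum), while non-basin blocks are shown to be full rank via a determinant-of-sums expansion into forests times products of outgoing edge weights. You instead sandwich the rank: for the lower bound you apply the matrix-tree theorem once to the whole matrix, deleting one chosen root $r_\eta$ per basin, and observe that the resulting $(n-n_B)\times(n-n_B)$ principal minor is $\pm$ a sum of positive forest weights, nonempty because every vertex has a directed path into some basin (in the condensation DAG every walk terminates in a sink component, which is a basin) and each basin is strongly connected; for the upper bound you exhibit $n_B$ independent kernel vectors. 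This is a legitimate and arguably leaner argument for the lower bound --- it replaces the paper's block-by-block expansion with a single connectivity observation --- and your resolution of the sign issue is exactly the paper's premise: the off-diagonal entries $w_{jk}$ are non-negative ($-\Omega$ is a weighted Laplacian), and edges are only drawn where $w_{jk}\neq 0$, so every $W(\tau)$ is strictly positive and no cancellation occurs. The one soft spot is your upper bound: the talk of ``reachable parts'' and ``pullback along in-paths'' overcomplicates and slightly misdescribes the kernel vectors, whose support actually lies entirely inside the basins. The clean statement is that, because a basin has no outgoing edges, the columns of $\Omega$ indexed by $N_\eta$ vanish outside $N_\eta$, so the square block $\Omega^{(\eta)}$ is a genuine Laplacian with vanishing column sums, hence singular, and any kernel vector of it extends by zero to a kernel vector of $\Omega$; the $n_B$ vectors so obtained have disjoint supports and are independent. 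With that repair your argument closes completely.
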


\begin{proof}
First consider the structure of $\Omega$ with respect to the strongly connected components of $G_\Omega$. It is possible to order the rows and columns of the matrix so that the strongly connected components form blocks on the diagonal, and the elements below the block-diagonal are zero. To see why this is true, consider that any digraph $G$ induces a digraph $G_{sc}(G)$ where its nodes are the  strongly connected components of $G$, and there is an edge between two nodes iff there is an edge in $G$ between the two components. This digraph \emph{must} be a tree, because if there was any cycle, that cycle would be strongly connected, which contradicts the assumption that the nodes were the strongly connected components. It follows that $G_{sc}(G_\Omega)$ can be used to order the 
rows and columns of $\Omega$: form blocks out of the strongly connected components, and order the blocks so that any block precedes any blocks upstream from it in $G_{sc}(G_\Omega)$. In this way, all of the basins come first, and any edges between blocks must be above the block-diagonal.

We can conclude the following due to this block-diagonal structure: $rank(G_\Omega) = \sum_{\eta=1}^{n_B}rank(\Omega^{(\eta)}) + \sum_{j=1}^{n_{NB}}rank(\tilde{\Omega}^{(j)})$, where the $\Omega^{(\eta)}$ are the blocks corresponding to the basins, and $\tilde{\Omega}^{(j)}$ are the strongly connected components that are not basins. $n_{NB}$ is the number of non-basin components.

Now use the matrix-tree theorem to deduce the rank of the basin blocks. They cannot be full-rank, because the vector of ones is in the kernel of $\Omega^{(\eta)T}$ (note that the basin blocks are Laplacian matrices, but the non-basin blocks are \emph{not}). If the basin has only one vertex, then the matrix is zero, and has rank zero. If the basin has more than one zero, remove an arbitrary row and column $j$, and apply the matrix-tree theorem. Because of the strong connectedness, there is at least one tree with root $j$ in $G_\Omega^{(\eta)}$, and therefore $-\sum_{\tau\in T_j(G_\Omega^{(\eta)})}W(\tau) < 0$. Since the principal minor is non-zero, and only one row/column was removed, it follows that the rank of $\Omega^{(\eta)}$ is $|G_\Omega^{(\eta)}| - 1$.

Now we show that the non-basin blocks are full-rank. Every non-basin block can be written $\tilde{\Omega}^{(j)} =  \tilde{\Omega}^{(j)}_B + \tilde{\Omega}^{(j)}_D$, where the first matrix is a Laplacian matrix containing only edge-weights corresponding to edges within the strongly connected component, and the second matrix is a diagonal matrix containing only the edge weights (times $-1$) of edges pointing outside the component. Let $I_j$ be the domain of $\tilde{\Omega}^{(j)}$, and let $I_{D,k}:=\{l: w_{lk}\ne 0, k\in I_j, l\ne I_j\}$ be the set of destinations of the outgoing edges from the non-basin. Then we have:
\begin{align*}
|\det(\tilde{\Omega}^{(j)})| &= \sum_{I\in 2^{I_j}, I\ne\emptyset}\left(\sum_{\tau\in T_I(G_{\tilde{\Omega}^{(j)}})} W(\tau)\right)\times \\ & \hspace{1in}\left(\prod_{k\in I} \sum_{i\in I_{D,k} } w_{ik}\right).
\end{align*}
Here we have used the determinant of sums formula $\det(A+B) = \sum_{I\in 2^{\mathbb{Z}^n}} C_I$, where $C$ is the matrix formed by taking columns of $A$ for column indices in $I$, and columns of $B$ for the complement. The columns from $\tilde{\Omega}_D^{(j)}$ only contribute diagonal elements, so we can factor each term. The first factor comes from applying the matrix-tree theorem to the columns from $\tilde{\Omega}_B^{(j)}$. The second factor comes from $\tilde{\Omega}_D^{(j)}$ and is just the product of its relevant diagonal elements. Note that all terms have the same sign: $-$ if $\tilde{\Omega}^{(j)}$ has odd dimension, and $+$ if even. Hence the absolute value. We have also excluded $I=\emptyset$ because the determinant of $\tilde{\Omega}^{(j)}_B$ is zero (it is a Laplacian matrix).

The above determinant must be non-zero. All terms in the sum are non-negative, so all we need to find is a positive term. We know there is at least one out-going edge $w_{ik}$, $i\in I_{D,k}$. Choose $I=\{k\}$. The second factor is positive since it is greater than or equal to $w_{ik}$. The first factor also must be positive, since $T_{k}(G_{\Omega^{(j)}})$ is non-empty (due to the strong-connectedness). Hence we have a positive term, and therefore the non-basin block must have full rank.

Summing up the ranks of all blocks, we get $n-n_B$ for the entire matrix. The kernel has dimension $n$ minus the rank, so it has dimension $n_B$.
\end{proof}

Now that we have the kernel dimension, we can describe the   stationary orbits:

\begin{theorem}
The intersection of $\ker(\Omega)$ with $\mathcal{T}$ is the simplex:
\begin{align*}
\{\sum_{\eta=1}^{n_B} s_\eta\Lambda_{\eta,\infty}&: s_\eta \ge 0, \sum_{\eta=1}^{n_B}s_\eta = 1 \} \\
\Lambda_{\eta,\infty}&:= \frac{\sum_{j\in N_\eta}\sum_{\tau \in T_j(G_\Omega^{(\eta)})}W(\tau) e_j}{\sum_{j\in N_\eta}\sum_{\tau \in T_j(G_\Omega^{(\eta)})}W(\tau)},
\end{align*}
where $e_j$ is the vector with one in the $j$'th place, and zeros elsewhere. 
\end{theorem}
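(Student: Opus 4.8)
The plan is to exhibit $n_B$ explicit vectors in $\ker(\Omega)$, use the dimension count $\dim\ker(\Omega)=n_B$ from the previous proposition to conclude that they form a basis, and then intersect their span with the simplex $\mathcal T$. For the first step, fix a basin and let $u^{(\eta)}$ be the vector supported on $N_\eta$ with $u^{(\eta)}_j=\sum_{\tau\in T_j(G_\Omega^{(\eta)})}W(\tau)$, so that $\Lambda_{\eta,\infty}=u^{(\eta)}/\big(\sum_{j\in N_\eta}u^{(\eta)}_j\big)$. Because a basin has no edges leaving $N_\eta$, for any vector $v$ supported on $N_\eta$ the image $\Omega v$ is again supported on $N_\eta$ and agrees there with $\Omega^{(\eta)}(v|_{N_\eta})$; hence it suffices to prove $\Omega^{(\eta)}u^{(\eta)}=0$. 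The case $|N_\eta|=1$ is trivial, since then $\Omega^{(\eta)}=0$ and $u^{(\eta)}=(1)$.

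For $|N_\eta|\ge 2$ I would argue via the adjugate. Here $\Omega^{(\eta)}$ is the column-sum-zero Laplacian of the strongly connected graph $G_\Omega^{(\eta)}$, so (as established above) $\mathrm{rank}(\Omega^{(\eta)})=|N_\eta|-1$, its left kernel is spanned by the all-ones vector $\mathbf 1$, and $\mathrm{adj}(\Omega^{(\eta)})\,\Omega^{(\eta)}=\Omega^{(\eta)}\,\mathrm{adj}(\Omega^{(\eta)})=0$ because $\det(\Omega^{(\eta)})=0$. The first identity forces every row of $\mathrm{adj}(\Omega^{(\eta)})$ to be a multiple of $\mathbf 1^{T}$, so $\mathrm{adj}(\Omega^{(\eta)})_{ij}=c_i$ is independent of $j$; the diagonal entries identify $c_i$ with the principal minor $\det(\Omega^{(\eta)}(-\{i\}))=(-1)^{|N_\eta|-1}u^{(\eta)}_i$ by the matrix-tree theorem. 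The second identity then says the column vector $(c_1,\dots,c_{|N_\eta|})^{T}\propto u^{(\eta)}$ lies in $\ker(\Omega^{(\eta)})$, which is what we wanted. Strong connectedness makes every $T_j(G_\Omega^{(\eta)})$ nonempty, and since all edge weights are positive, $u^{(\eta)}_j>0$ for each $j\in N_\eta$; thus $\Lambda_{\eta,\infty}$ is well defined, is strictly positive on $N_\eta$, vanishes off $N_\eta$, and has component-sum $1$.

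It remains to assemble the pieces. The vectors $\Lambda_{1,\infty},\dots,\Lambda_{n_B,\infty}$ have pairwise disjoint supports, hence are linearly independent; since there are $n_B$ of them and $\dim\ker(\Omega)=n_B$, they span $\ker(\Omega)$. Now write a generic kernel element as $v=\sum_\eta t_\eta\Lambda_{\eta,\infty}$ with $t_\eta\in\mathbb R$. Each $\Lambda_{\eta,\infty}$ has unit component-sum, so $\sum_i v_i=\sum_\eta t_\eta$ and membership in $\mathcal T$ requires $\sum_\eta t_\eta=1$. By the disjointness of supports, $v_i=t_\eta(\Lambda_{\eta,\infty})_i$ with $(\Lambda_{\eta,\infty})_i>0$ for $i\in N_\eta$, while $v_i=0$ for $i\notin N_B$; hence $v$ has nonnegative components iff $t_\eta\ge 0$ for all $\eta$. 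These are precisely the constraints defining the asserted simplex.

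The one genuinely nontrivial step is the identification of the basin-block kernel vector with the tree-sum vector $u^{(\eta)}$. The adjugate argument reduces it to the matrix-tree theorem plus the elementary fact that a strongly connected Laplacian has one-dimensional left kernel; the alternative --- verifying $\Omega^{(\eta)}u^{(\eta)}=0$ entry by entry --- amounts to the Markov-chain tree identity and is more cumbersome to present. Two points need care: the relevant trees are those internal to $G_\Omega^{(\eta)}$ rather than to all of $G_\Omega$, which is automatic once one works with the block $\Omega^{(\eta)}$ and its adjugate; and extending $u^{(\eta)}$ by zeros must stay in $\ker(\Omega)$, which is exactly where the defining ``no outgoing edges'' property of a basin is used.
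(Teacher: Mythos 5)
Your argument is correct, and for the central step it takes a genuinely different route from the paper. The paper proves $\Omega^{(\eta)}\Lambda_{\eta,\infty}=0$ directly, row by row: it expands each row into a sum of $w$-products on both sides of equation (\ref{cancel1}) and constructs an explicit edge-swapping bijection $f_j$ between $E_j\times T_j$ and $\cup_{k\ne j}T_k$, so that the terms cancel pairwise. You instead invoke the adjugate: since $\det(\Omega^{(\eta)})=0$ and the left kernel is spanned by $\mathbf 1^T$, every row of $\mathrm{adj}(\Omega^{(\eta)})$ is a multiple of $\mathbf 1^T$, the constant $c_i$ is read off from the diagonal via the already-stated matrix-tree theorem as $(-1)^{|N_\eta|-1}u^{(\eta)}_i$, and $\Omega^{(\eta)}\,\mathrm{adj}(\Omega^{(\eta)})=0$ then puts $u^{(\eta)}$ in the kernel. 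This is the classical derivation of the Markov-chain tree identity and is shorter, since it reuses Theorem II.1 rather than re-proving a bijection that is essentially the combinatorial core of that theorem; the paper's version buys a self-contained combinatorial proof that does not lean on the adjugate's rank-one structure. Your treatment of the reduction to the blocks (no outgoing edges implies $\Omega v$ is supported on $N_\eta$ and the basin block is itself column-sum-zero) and of the intersection with $\mathcal T$ (disjoint supports, strict positivity of $u^{(\eta)}_j$ from strong connectivity and positive edge weights, hence $s_\eta\ge0$ and $\sum_\eta s_\eta=1$ are exactly the simplex constraints) is actually spelled out more carefully than in the paper, which asserts these points without detail. No gaps.
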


\begin{proof}
In the previous proposition, we showed that each basin has a one-dimensional kernel, and every non-basin has an empty kernel. The total kernel is just the span of $\Lambda_{\eta,\infty}$, provided $\Omega^{(\eta)}\Lambda_{\eta,\infty}=0$. The stationary orbits are just the intersection of the kernel with $\mathcal{T}$, which is all linear combinations of the $\Lambda_{\eta,\infty}$'s such that the coefficients sum to one.

So all we have to do is show that $\Omega^{(\eta)}\Lambda_{\eta,\infty}=0$. An arbitrary row $j$ of $\Omega^{(\eta)}$ times $\Lambda_{\eta,0}$ becomes:
\begin{align}
-\left(\sum_{\tau\in T_j(G_\Omega^{(\eta)})}W(\tau)\right) \left(\sum_{k\in N_\eta ,k\ne j} w_{kj}\right) &+ \nonumber \\
  \sum_{k\in N_\eta ,k\ne j} \left(w_{jk} \sum_{\tau\in T_k(G_\Omega^{(\eta)})}W(\tau) \right) &= 0 \nonumber \\
\sum_{\substack{ \tau\in T_j(G_\Omega^{(\eta)})\\ k\in N_\eta ,k\ne j }} w_{kj}W(\tau) = 
  \sum_{\substack{ k\in N_\eta ,k\ne j \\ \tau\in T_k(G_\Omega^{(\eta)})}}w_{jk}W(\tau)& \label{cancel1} 
\end{align}
where we have discarded the denominator (which must be non-zero since there is at least one tree in $G_\Omega^{(\eta)}$). 

We have a sum of products of $w$-elements on each side, and there are $n_\eta:=\dim(\Omega^{(\eta)})$ elements in each product. The number of terms on each side is $(n_\eta-1)n_\eta^{n_\eta-2}$. That is, $n_\eta -1$ possible values of $k$ times the $n_\eta^{n_\eta-2}$ trees per root. We will show that there is a bijection between the products on the left and those on the right. The products on the left can be represented as $E_j(G_\Omega^{(\eta)})\times T_j({G_\Omega^{(\eta)}})$, where $E_j(\dot)$ represents the set of outgoing edges from node $j$. We will represent the products on the right as $\cup_{k\ne j} T_{k}(G_\Omega^{(\eta)})$. The products are a tree times the weight of edge $\overleftarrow{jk}$, but once the $k$-rooted tree is chosen, that edge is uniquely determined, therefore we only worry about the tree.

Let $r(\tau)$ represent the root of tree $\tau$, let $p(j,\tau)$ represent the parent vertex of node $j$ in a tree $\tau$, and let $d(j,k,\tau)$ represent the daughter of $j$ in a $j$-rooted tree $\tau$ in whose branch the vertex $k$ lies. The bijection $f_j$ that we are looking for, and its inverse $f^{-1}_j$, are as follows:
\begin{align*}
&f_j(\overleftarrow{kj},\tau) = \tau + \overleftarrow{kj} - \overleftarrow{j,d(j,k,\tau)} \\
&f_j^{-1}(\tau) = \left( \overleftarrow{p(j,\tau),j}, \tau - \overleftarrow{p(j,\tau),j} + \overleftarrow{j,r(\tau)} \right).
\end{align*}
$f_j$ is well-defined: the vertex $j$ is no longer a root because it now has an out-going edge, while $d(j,k,\tau)$  lost its outgoing edge so it is now the root. We now have a tree in $\cup_{k\ne j} T_{k}(G_\Omega^{(\eta)})$, since $d(j,k,\tau)$ cannot be the same of $j$. One may argue that $j$ may not have a daughter, but this can only happen if $N_\eta=\{j\}$, and this case is trivial (we have defined the weight of a trivial tree to be one, and our formula yields $\Lambda_{\eta,\infty}=e_j$, which must be the stationary state of a basin of one). We have not formed a cycle because we removed an edge from the cycle that was temporarily formed via the new edge $\overleftarrow{kj}$. 

$f_j^{-1}$ is well-defined. $\overleftarrow{p(j,\tau)}$ is in $E_j(G_\Omega^{(\eta)})$, by definition of $E_j()$. The old root $r(\tau)$ gains an outgoing edge and is no longer the root, while $j$ loses its outgoing edge, so it must be the new root. As before, we have not formed any new cycles, since the added and removed edges would be in the same cycle.

To show bijectivity: 
\begin{align*}
f_j \circ f_j^{-1}(\tau) =& \tau - \overleftarrow{p(j,\tau),j}+\overleftarrow{j,r(\tau)}  + \overleftarrow{p(j,\tau),j}\nonumber \\ &- \overleftarrow{j,d(j,k, \tau - \overleftarrow{p(j,\tau),j}+\overleftarrow{j,r(\tau)})}\\
&\hspace{-.3in}= \tau +\overleftarrow{j,r(\tau)} - \overleftarrow{j,d(j,k, \tau - \overleftarrow{p(j,\tau),j}+\overleftarrow{jk})}\\
&\hspace{-.3in}= \tau + \overleftarrow{j,r(\tau)} - \overleftarrow{j,r(\tau)} \\
&\hspace{-.3in}= \tau \\
f_j^{-1}\circ f_j (\overleftarrow{kj},\tau) =&  \left(\overleftarrow{p(j, f_j(\overleftarrow{kj}, \tau)),j}, \tau + \overleftarrow{kj} \nonumber  \right. \\
 & -\overleftarrow{j,d(j,k,\tau)} - \overleftarrow{p(j,f_j(\overleftarrow{kj},\tau)),j} \nonumber \\
& \left. +\overleftarrow{j,r(f_j(\overleftarrow{kj},\tau))} \right)  \\
= &(\overleftarrow{kj}, \tau + \overleftarrow{kj}  - \overleftarrow{j,d(j,k,\tau)}) \nonumber \\
& - \overleftarrow{kj}+\overleftarrow{j, d(j,k,\tau)}) )\\
=&(\overleftarrow{kj}, \tau ). 
\end{align*}
So we can map every product $w_{kj}W(\tau)$ on the left-hand side of (\ref{cancel1}) to a term $w_{jk}W(\tau)$ on the right-hand side, and vice versa. It follows that everything cancels, and the equation is satisfied. Thus $\Lambda_{\eta,\infty}$ is in the kernel of $\mathcal{T}$.
\end{proof}

\section{Constraints}

We have found an interpretation of the kernel of $\Omega$ in terms of trees in $G_\Omega$. It is also possible to describe the kernel of $\Omega^T$ in terms of forests. Since the kernel of both has dimension $n_B$, there should be $n_B$ independent vectors $\kappa_\eta \in \ker(\Omega^T)$ such that $\kappa_\eta^T\Lambda$ is instantaneously stationary. And if $\Omega$ is time-independent, these quantities should be conserved.

Define the shorthand $T_B(G_\Omega):=T_{N_B(G_\Omega)}(G_\Omega)$. Also define $T_B(G_\Omega, \eta, l)$ to the set of forests in $T_B(G_\Omega)$ such that node $l$ is in the same tree as basin $\eta$. Then we have the following theorem:

\begin{theorem}
The kernel of $\Omega^T$ is spanned by the following linearly independent vectors:
\begin{align*}
\kappa_\eta := \left(\sum_{\tau\in T_{B}(G_\Omega)}W(\tau)\right) \sum_{j\in N_\eta } e_j  + 
\sum_{\substack{l \not\in N_B(G_\Omega) \\ \tau\in T_B(G_\Omega,\eta,l)}} W(\tau) e_l.
\end{align*}
If $\Omega$ has no time dependence, the quantities $\kappa_\eta^T \Lambda$ are conserved.
\end{theorem}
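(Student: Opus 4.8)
The plan is to verify that $\Omega^{T}\kappa_\eta=0$ and that the $n_B$ vectors $\kappa_\eta$ are linearly independent; the ``spanning'' and ``conserved'' claims then follow at once. Conservation: if $\Omega$ is time-independent, $\tfrac{d}{dt}\!\left(\kappa_\eta^{T}\Lambda\right)=\kappa_\eta^{T}\Omega\Lambda=\left(\Omega^{T}\kappa_\eta\right)^{T}\!\Lambda=0$. Independence: the restriction of $\kappa_\eta$ to the vertices of $N_B$ equals $\left(\sum_{\tau\in T_B(G_\Omega)}W(\tau)\right)\sum_{j\in N_\eta}e_j$, so, the $N_\eta$ being disjoint, it suffices that the scalar $\sum_{\tau\in T_B(G_\Omega)}W(\tau)$ be nonzero. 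By the matrix-tree theorem this scalar equals $\pm\det(\Omega(-N_B))$, which does not vanish, because in the block ordering used in the proof of the rank proposition $\Omega(-N_B)$ is block-triangular with the full-rank non-basin blocks on its diagonal. Since that proposition gives $\dim\ker(\Omega^{T})=\dim\ker(\Omega)=n_B$, the $\kappa_\eta$ then form a basis of $\ker(\Omega^{T})$.

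The substance is $\Omega^{T}\kappa_\eta=0$. First I would rewrite $\kappa_\eta$ uniformly. In any forest $\tau\in T_B(G_\Omega)$ each vertex of $N_B$ is the root of its own component, since a basin vertex has no edge leaving its basin and hence none available in a forest rooted at $N_B$. Writing $\mathcal{C}_\eta(\tau)$ for the union of the components of $\tau$ whose root lies in $N_\eta$, the coefficient of $e_l$ in $\kappa_\eta$ is $\sum_{\tau\in T_B(G_\Omega):\,l\in\mathcal{C}_\eta(\tau)}W(\tau)$ for \emph{every} vertex $l$ — the first term of the stated formula covering the case $l\in N_B$ — so
\[
\kappa_\eta=\sum_{\tau\in T_B(G_\Omega)}W(\tau)\sum_{l\in\mathcal{C}_\eta(\tau)}e_l .
\]
Since $(\Omega^{T}\kappa_\eta)_j=\sum_{\overleftarrow{lj}\in E_j(G_\Omega)}w_{lj}\big((\kappa_\eta)_l-(\kappa_\eta)_j\big)$, the contribution of a single forest $\tau$ to $(\Omega^{T}\kappa_\eta)_j$ is a signed sum of $w_{lj}$ over the out-neighbours $l$ of $j$ that are ``discordant'' with $j$ relative to $\mathcal{C}_\eta(\tau)$. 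If $j$ is a basin vertex this vanishes termwise, since all out-neighbours of $j$ lie in $j$'s basin, which sits inside or outside $\mathcal{C}_\eta(\tau)$ along with $j$. So everything reduces to the identity, for each non-basin $j$,
\[
\sum_{(\tau,l)\in A}w_{lj}\,W(\tau)=\sum_{(\tau,l)\in B}w_{lj}\,W(\tau),
\]
with $A$ (resp.\ $B$) the set of pairs $(\tau,l)$, $\tau\in T_B(G_\Omega)$, $\overleftarrow{lj}\in E_j(G_\Omega)$, such that $j\in\mathcal{C}_\eta(\tau),\ l\notin\mathcal{C}_\eta(\tau)$ (resp.\ $j\notin\mathcal{C}_\eta(\tau),\ l\in\mathcal{C}_\eta(\tau)$).

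The crux is a weight-preserving bijection $A\to B$ that reroutes the outgoing edge at $j$, in the spirit of the map $f_j$ of the previous theorem. Given $(\tau,l)\in A$, let $p=p(j,\tau)$ and send $(\tau,l)$ to $\big(\tau-\overleftarrow{pj}+\overleftarrow{lj},\,p\big)$; the inverse is the identical recipe applied to a pair of $B$. The checks are short: $l\neq p$ throughout $A$ (a vertex shares a component with its parent, so $p\in\mathcal{C}_\eta(\tau)$ while $l\notin\mathcal{C}_\eta(\tau)$), hence $\overleftarrow{lj}\notin\tau$ and the swap is legitimate; no directed cycle is created since $l$, lying in a component different from $j$'s, is not a descendant of $j$; the root set stays $N_B$ because only $j$'s out-degree changes, $j$ losing $\overleftarrow{pj}$ and gaining $\overleftarrow{lj}$; in the new forest $j$ and its old subtree lie in $l$'s component (root outside $N_\eta$) while $p$ stays in the remnant of $j$'s old component (root in $N_\eta$), so the image is in $B$; and $W\big(\tau-\overleftarrow{pj}+\overleftarrow{lj}\big)\,w_{pj}=W(\tau)\,w_{lj}$, so the $A$-term $w_{lj}W(\tau)$ goes to an equal $B$-term. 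Summing over $\tau$, the two families cancel and $\Omega^{T}\kappa_\eta=0$.

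The step I expect to demand the most care is this bijection — specifically, confirming that the edge swap never produces a directed cycle and never alters the root set, the real danger being that the chosen out-neighbour $l$ is a descendant of $j$. That is ruled out precisely by the ``discordance'' defining $A$ and $B$, which forces $l$ and $j$ into different components; with that in hand, the rest is bookkeeping of the same kind as in the preceding theorem's proof.
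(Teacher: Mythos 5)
Your proof is correct and follows essentially the same route as the paper's: the basin rows vanish because basin edges stay inside their basin, and the non-basin rows reduce to a forest-weight identity proved by the same weight-preserving bijection that swaps $j$'s outgoing edge $\overleftarrow{p(j,\tau),j}$ for $\overleftarrow{lj}$. Your rewriting of $\kappa_\eta$ via the indicator of membership in $\mathcal{C}_\eta(\tau)$ cancels the concordant terms up front — which lets you drop the paper's identity-map case (for $k$ upstream of $j$) and its term-counting argument — and your observation that $\sum_{\tau\in T_B(G_\Omega)}W(\tau)=\pm\det(\Omega(-N_B))\neq 0$ makes the linear-independence claim more rigorous than the paper's ``fairly clear.''
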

\begin{proof}
The linear independence of these vectors is fairly clear. Each $\kappa_\eta$ has non-zero components in $N_\eta$, but zero components in the other basins $N_{\eta'}$, $\eta'\ne \eta$. Since we know the dimension of the kernel is $n_B$, and we have $n_B$ linearly dependenet vectors, we just need to show that an arbitrary row of $\Omega^T$ multiplied by $\kappa_\eta$ gives zero for any $\eta$. There are two cases: rows corresponding to nodes inside and outside the basins.

In the first case, the non-zero elements are in the same basin (otherwise, there would be an edge leading out of the basin). For arbitrary $j\in N_\eta$, we have:
\begin{align*}
e_j^T \Omega^T \kappa_{\eta'} &= \delta_{\eta\eta'} \sum_{\substack{k\in N_\eta \\ \tau\in T_B(G_\Omega)}} W(\tau) e_j^T \Omega^T e_k \\ 
&= \delta_{\eta\eta'} \sum_{\tau\in T_B(G_\Omega)} W(\tau) \left(\sum_{k\in N_\eta} e_k^T \Omega  e_j \right)\\
&= 0,
\end{align*}
where $\sum_{k\in N_\eta} e_k^T \Omega  e_j = 0$ due to the vanishing column-sums of $\Omega$, and the fact that the only non-zero elements are in $N_\eta$.

In the second case, we have an arbitrary row $j$ outside of the basins. This gives:
\begin{align*}
e_j^T \Omega^T \kappa_{\eta} = \sum_{\substack{k\in N_\eta \\ \tau\in T_B(G_\Omega)}} W(\tau) e_j^T \Omega^T e_k + \\
\sum_{\substack{k \not\in N_B(G_\Omega) \\ \tau\in T_B(G_\Omega,\eta,k)}} W(\tau) e_j^T \Omega^T e_k. \\
\end{align*}
Unlike in the previous case, there is no vanishing column-sum, as the first sum does not cover all of the non-zero $\Omega$-elements, nor does the second (and the coefficients $W(\tau)$ depend on $k$). If we expand the sums so that we have a sum of products of $w$-elements, and set the expression to zero, we get:
\begin{align}
\sum_{\substack{k\in N_\eta \\ \tau\in T_B(G_\Omega)}} w_{kj} W(\tau) + \sum_{\substack{k \not\in N_B(G_\Omega) \\ k\ne j \\ \tau\in T_B(G_\Omega,\eta,k)}} w_{kj} W(\tau)\nonumber \\ = \sum_{\substack{k \ne j \\ \tau\in T_B(G_\Omega,\eta,j)}} w_{kj} W(\tau) \label{cancel2}
\end{align}
To show this equation is true, our strategy will be the same as the previous theorem: we show that there are the same number of terms on each side, and there is a bijection between them.

Let $n_F:= |T_B(G_\Omega)|$ and $n_{F,\eta}:=|T_B(G_\Omega,\eta,j)|$. Note that the latter is independent of $j$, and $n_F = \sum_{\eta=1}^{n_B}n_{F,\eta}$. Furthermore, we have
\begin{align*}
\frac{n_{F,\eta}}{n_F} = \frac{|N_\eta|}{|N_B(G_\Omega)|}.
\end{align*}
This is true because the number of forests with node $j$ in the tree rooted in basin $\eta$ must be proportional to the number of vertices in the basin (if $j$'s path to the root reaches the basin first at node $k$, there are $|N_\eta|-1$ other forests, where we reattach $j$'s path to the other nodes in the basin instead of $k$). 

So to count the number of terms in (\ref{cancel2}), we have $|N_\eta|n_F$ in the first sum of the LHS, and $(n-|N_B(G_\Omega)|-1)n_{F,\eta}$ in the second sum. On the RHS, we have $(n-1)n_{F,\eta}$ terms. This combines to:
\begin{align*}
|N_\eta|n_F + (n-|N_B(G_\Omega)|-1)n_{F,\eta} &= (n-1)n_{F,\eta} \\
|N_\eta|n_F &= |N_B(G_\Omega)| n_{F,\eta} \\
|N_\eta|n_F &= |N_\eta|n_F,
\end{align*}
where we have used the ratio of forest numbers above in the last line.

So we have the same number of terms on each sides, and each term has the same number of $w$-factors. We must find a bijection between the terms to show cancellation. We can represent the set on the LHS as $S_{L,1}\cup S_{L,2}$ and the set on the right as $S_R$, where:
\begin{align*}
S_{L,1}&:= \{(\overleftarrow{kj},\tau): k\in N_\eta, \tau\in T_B(G_\Omega) \} \\
S_{L,2}&:= \{(\overleftarrow{kj},\tau): k\not\in N_B(G_\Omega), k\ne j, \tau\in T_B(G_\Omega, \eta, k) \} \\
S_{R}&:= \{(\overleftarrow{kj},\tau): k\ne j, \tau\in T_B(G_\Omega, \eta, j) \}.
\end{align*}
Define our bijection $f:S_{L,1}\cup S_{L,2}\rightarrow S_R$ as follows:
\begin{enumerate}
\item If $k$ is upstream from $j$, then $f(\overleftarrow{kj},\tau) := (\overleftarrow{kj},\tau)$.
\item Otherwise, $f(\overleftarrow{kj},\tau) := (\overleftarrow{p(j,\tau),j}, \tau + \overleftarrow{kj} - \overleftarrow{p(j,\tau),j} )$.
\end{enumerate}
Note that if we write $(e',\tau')=f(e,\tau)$, then $\tau'$ must be a forest if $\tau$ is: node $j$ in $\tau'$ has one outgoing node, and no cycles have been formed, since $j$ is not attached to an upstream node. 

We can also define $f^{-1}$ in the same way as $f$.  Because $f$ and $f^{-1}$ have different domains, they are not the same function, but both require swapping the input edge with $j$'s outgoing edge provided the input edge is not upstream. We must still show that $f$ and $f^{-1}$ have $S_R$ and $S_{L,1}\cup S_{L,2}$ as images.

First, $f(S_{L,1})\subseteq S_R$. The edge $e$ points to a node in $N_\eta$, so it can't be upstream from $j$, which is not in a basin. So $e$ is swapped in, and since it points to $N_\eta$, $j$ is now in the same tree. Therefore $\tau' \in T_B(G_\Omega,\eta,j)$. And $e'$ does not point to $j$ since it is outgoing from $j$, so $(e',\tau')\in S_R$.

Next, $f(S_{L,2})\subseteq S_R$. If $k$ is upstream from $j$, then $\tau$ must be in $T_B(G_\Omega,\eta,j)$, since it is also in $T_B(G_\Omega, \eta, k)$, and $j$ and $k$ are in the same tree. No swap is made, $\tau'=\tau$, and so $\tau'$ is in $T_B(G_\Omega,\eta,j)$. Therefore $(e',\tau')\in S_R$. If $k$ is not upstream from $j$, the edges are swapped. Since $\tau\in T_B(G_\Omega,\eta,k)$, attaching $j$ to $k$ puts $\tau'\in T_B(G_\Omega,\eta,j)$, so $(e',\tau')\in S_R$. Therefore we have $f(S_{L,1}\cup S_{L,2}) \subseteq S_R$.

Finally, $f^{-1}(S_R)\subseteq S_{L,1}\cup S_{L,2}$. Here, we identify $f^{-1}(e',\tau') = (e, \tau)$. There are three cases: (1) $e'$ pointing from $j$ to a node in $N_\eta$ and (2) $e'$ pointing from $j$ to a node outside of $N_B(G_\Omega)$ that is not upstream from $j$, and (3) $e'$ pointing from $j$ to a node upstream. In the first case, we remove $e'$ and replace with $e$ (since $k$ cannot be upstream if it in the basin), and since $e'$ points to $N_\eta$, and $\tau$ is some forest in $T_B(G_\Omega)$, then $(e,\tau)\subseteq S_{L,1}$. In the second case, the edges are swapped. Since $\tau'\in T_B(G_\Omega,\eta,j)$, it is also in $T_B(G_\Omega,\eta,p(j,\tau'))$ ($j$ must have a parent, else it would form its own basin). So we have $f^{-1}(\overleftarrow{kj}, \tau') = (\overleftarrow{p(j,\tau'),j}, \tau)$. If $p(j,\tau')$ is in $N_\eta$, then the edge and tree must be in $S_{L,1}$. If it is a non-basin node, then $\tau\in T_B(G_\Omega,\eta,p(j,\tau'))$ and the edge and tree must be in $S_{L,2}$. In the third and final case, no swap is made. Since $k$ is upstream, and $\tau'\in T_B(G_\Omega,\eta,j)$, then $\tau'$ is also in $T_B(G_\Omega,\eta,k)$. $k$ must also be a non-basin node, so $(e',\tau')=(e,\tau)\in S_{L,2}$. 

We can conclude that $f$ is a bijection, and therefore (\ref{cancel2}) is true.
\end{proof}

From a computational standpoint, it is easier to calculate a parallel set of $\kappa'_\eta$-vectors that factor out the piece inside the basins. Define $G_\Omega'$ to be the graph $G_\Omega$ with all edges inside the basins removed (but not edges from outside the basins to the basins). $T_B(G_\Omega')$ is defined in the same way, and $T_B(G_\Omega',\eta,l)$ is the set of all forests with roots in the basins such that node $l$ is in a tree rooted in a node corresponding to basin $\eta$. Then we have: 
\begin{corollary}
The kernel of $\Omega^T$ is spanned by the following linearly independent vectors:
\begin{align*}
\kappa_\eta' := \left(\sum_{\tau\in T_{B}(G_\Omega')}W(\tau)\right) \sum_{j\in N_\eta } e_j  + 
\sum_{\substack{l \not\in N_B(G_\Omega) \\ \tau\in T_B(G_\Omega',\eta,l)}} W(\tau) e_l.
\end{align*}
\end{corollary}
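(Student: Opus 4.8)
The plan is to read the corollary off the preceding theorem by factoring every forest in $T_B(G_\Omega)$ into a ``within-basin'' piece and an ``external'' piece, the latter being exactly what $G_\Omega'$ records. Since the preceding theorem already establishes that $\{\kappa_\eta\}_{\eta=1}^{n_B}$ is a basis of $\ker(\Omega^T)$, it suffices to show that $\kappa_\eta = Z\,\kappa'_\eta$ for all $\eta$, with $Z$ a single nonzero scalar independent of $\eta$; rescaling a basis by a common nonzero constant again gives a basis.

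First I would set up the forest factorization. A forest $\tau \in T_B(G_\Omega)$ has one tree rooted in each basin, and because $N_\eta$ is strongly connected with no edges leaving it, following outgoing forest-edges from any vertex of $N_\eta$ stays inside $N_\eta$ until the root in $N_\eta$ is reached; hence the edges of $\tau$ with both ends in $N_\eta$ form a rooted spanning tree $T_\eta$ of the induced subgraph $G_\Omega^{(\eta)}$. Deleting all such within-basin edges from $\tau$ leaves a forest $\tau'$ of $G_\Omega'$ in which every vertex of $N_B$ is a root and every non-basin vertex still has an outgoing edge whose path now terminates at the vertex where the original path first entered a basin; thus $\tau' \in T_B(G_\Omega')$. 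Conversely, a choice of rooted spanning tree $T_\eta$ of $G_\Omega^{(\eta)}$ for each $\eta$ together with any $\tau' \in T_B(G_\Omega')$ reassembles into a unique $\tau \in T_B(G_\Omega)$: no cycle can appear, since the reinserted edges live inside the respective basins while $\tau'$ already routes every vertex to $N_B$. This is a bijection between $T_B(G_\Omega)$ and the product of the sets of rooted spanning trees of the basins with $T_B(G_\Omega')$, and since $W$ is the product of edge weights, $W(\tau) = \left(\prod_{\eta'} W(T_{\eta'})\right) W(\tau')$. The same argument carries over without change to $T_B(G_\Omega,\eta,l)$, because whether the non-basin vertex $l$ lands in basin $\eta$'s tree is a property of $\tau'$ alone and is untouched by the choice of the within-basin trees; so $T_B(G_\Omega,\eta,l)$ corresponds bijectively, with the same multiplicative weight, to the product of the sets of rooted spanning trees of all basins with $T_B(G_\Omega',\eta,l)$.

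Summing over the within-basin factor then produces the common scalar $Z := \prod_{\eta'=1}^{n_B} Z_{\eta'}$, where $Z_{\eta'} := \sum_{j\in N_{\eta'}}\sum_{\tau \in T_j(G_\Omega^{(\eta')})}W(\tau)$ is precisely the normalizing denominator appearing in $\Lambda_{\eta',\infty}$ in the stationary-orbit theorem; it is nonzero for the same reason invoked there (each basin is strongly connected, so it carries at least one rooted spanning tree), and strictly positive whenever the edge weights are nonnegative, as in the GPM case. Substituting the factorized sums into the formula for $\kappa_\eta$ — the coefficient of each $j\in N_\eta$ becomes $Z\left(\sum_{\tau'\in T_B(G_\Omega')}W(\tau')\right)$ and the coefficient of each non-basin $l$ becomes $Z\left(\sum_{\tau'\in T_B(G_\Omega',\eta,l)}W(\tau')\right)$ — gives exactly $\kappa_\eta = Z\,\kappa'_\eta$. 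Hence $\{\kappa'_\eta\}_{\eta=1}^{n_B}$ spans $\ker(\Omega^T)$ and is linearly independent.

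I expect the only real work to be the bookkeeping in the bijection: checking that deleting the within-basin edges of $\tau$ yields a genuine element of $T_B(G_\Omega')$ (no non-basin vertex is left rootless, no spurious root is created, and the basin a given non-basin vertex reaches is unchanged), and that the reassembly never closes a cycle. The auxiliary fact $Z_{\eta'}\neq 0$ should be stated explicitly but needs no argument beyond the one already used for the stationary-orbit theorem.
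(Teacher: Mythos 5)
Your argument is correct, but it reaches the corollary by a genuinely different route than the paper does. The paper's own proof is a single line --- ``the proof proceeds in the same way as the theorem'' --- meaning it re-runs the edge-swapping bijection and term-by-term cancellation of the preceding theorem directly on the pruned graph $G_\Omega'$. You instead derive the corollary from the \emph{statement} of that theorem: you factor each forest $\tau\in T_B(G_\Omega)$ into its within-basin part (a rooted spanning tree of each $G_\Omega^{(\eta')}$, which exists because a basin vertex's unique outgoing forest edge cannot leave the basin) and its external part $\tau'\in T_B(G_\Omega')$, check that membership in $T_B(G_\Omega,\eta,l)$ is detected by $\tau'$ alone, and conclude $\kappa_\eta = Z\,\kappa'_\eta$ with $Z=\prod_{\eta'}\sum_{j\in N_{\eta'}}\sum_{T\in T_j(G_\Omega^{(\eta')})}W(T)>0$. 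This buys more than the paper's route: it exhibits the exact proportionality constant (the product of the tree-sum normalizers from the stationary-orbit theorem of Section III), turns the paper's informal remark that the $G_\Omega'$ computation is ``quicker'' into a precise statement, and shows the two families of conserved quantities agree up to a single overall scale. One caveat you should make explicit: your factorization presupposes that a forest in $T_B(G_\Omega)$ has exactly one tree per basin, each tree containing that entire basin. That is the reading forced by the paper's use of $T_B(G_\Omega,\eta,l)$ (``$l$ in the same tree as basin $\eta$'') and by the fact that the corollary is meant to differ from the theorem, but it conflicts with the literal shorthand $T_B(G_\Omega):=T_{N_B(G_\Omega)}(G_\Omega)$, under which every vertex of $N_B$ would be a root and no within-basin edge could occur (making $T_B(G_\Omega)=T_B(G_\Omega')$ and the corollary vacuous). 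State the convention you are using; under the intended one your proof is complete.
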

\begin{proof}
The proof proceeds in the same way as the theorem.
\end{proof}
The forests in these sums will have fewer branches, and there will be fewer of them, so the calculations are quicker.

If we have a set of stationary projectors, we can uniquely identify the asymptotic state, provided the initial eigenprojectors coincide:
\begin{corollary}
Suppose a density operator has initial state $\rho = \sum_{j=1}^n \Lambda_{0,j} \pi_j$, where $\{\pi_j\}$ are mutually orthogonal rank-one projectors that commute with the Hamiltonian and Lindblad operators. Then the system approaches the following state asymptotically:
\begin{align*}
\rho_\infty &:=\sum_{\eta=1}^{n_B}\sum_{\substack{j\in N_\eta \\ \tau \in T_j(G_\Omega)}} c_\eta W(\tau)\pi_j \\
c_\eta &:= \sum_{j\in N_\eta} \Lambda_{0,j} + \frac{\sum_{l\not\in N_B(G_\Omega)} \sum_{\tau\in T_B(G_\Omega',\eta,l)} W(\tau)\Lambda_{0,l}}{\sum_{\tau\in T_B(G_\Omega')}W(\tau)}
\end{align*}
\end{corollary}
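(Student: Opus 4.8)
The plan is to combine the two previous results — Theorem (kernel of $\Omega$ in terms of trees) and Corollary (the $\kappa'_\eta$ spanning $\ker(\Omega^T)$) — with the hypothesis that the initial eigenprojectors $\{\pi_j\}$ commute with $H$ and all $L_\alpha$. The first observation is that under this commutation hypothesis the $\pi_j$ are stationary: inspecting (\ref{tpderiv}), the Hamiltonian terms $\pi_j H \pi_k - \pi_k H \pi_j$ vanish because $[\pi_j,H]=0$, and each $\mathcal{L}_D(\pi_l)$ is a sum of terms $L_\alpha \pi_l L_\alpha^\dagger - \tfrac12(\cdots)$ which, using $[\pi_l,L_\alpha]=0$ and orthogonality $\pi_j\pi_k=\delta_{jk}\pi_j$, produces only diagonal contributions $\propto \pi_l$, so $\pi_j \mathcal{L}_D(\pi_l)\pi_k = 0$ for $j\ne k$. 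Hence $\dot\pi_j = 0$ for all $t$, the ODE (\ref{dLamb}) has a genuinely time-independent $\Omega$, and the full Lindblad-von Neumann dynamics reduces to the linear system $\dot\Lambda = \Omega\Lambda$ on the simplex with this fixed basis.

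Next I would analyze the long-time limit of $\dot\Lambda = \Omega\Lambda$ using the block-triangular structure established in the first Proposition of Section III: order vertices so the basins come first and all inter-block edges point ``upstream.'' On each basin block $\Omega^{(\eta)}$ the restricted dynamics is an irreducible Laplacian flow converging to the unique stationary vector $\Lambda_{\eta,\infty}$ of the Theorem; on the non-basin part $-\tilde\Omega$ is (minus) an invertible matrix all of whose relevant eigenvalues have negative real part (the non-basin blocks are full-rank and, being diagonally dominant with the strict outflow, Hurwitz), so the non-basin populations decay and redistribute into the basins. Thus $\Lambda(t) \to \sum_\eta c_\eta \Lambda_{\eta,\infty}$ for some coefficients $c_\eta \ge 0$ summing to one. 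The only thing left is to identify the $c_\eta$, and here the conserved quantities do the work: since $\Omega$ is time-independent, the Corollary gives $\kappa'^T_\eta \Lambda(t) = \kappa'^T_\eta \Lambda_0$ for all $t$, and passing to the limit, $\kappa'^T_\eta \Lambda_\infty = \kappa'^T_\eta \Lambda_0$. Because $\kappa'_\eta$ has support only on $N_\eta$ among the basins (with constant value $\sum_{\tau\in T_B(G'_\Omega)}W(\tau)$ there) and on the non-basin nodes, while $\Lambda_{\eta',\infty}$ is supported in $N_{\eta'}$, evaluating $\kappa'^T_\eta\Lambda_\infty = c_\eta\big(\sum_{\tau\in T_B(G'_\Omega)}W(\tau)\big)$ and $\kappa'^T_\eta\Lambda_0 = \big(\sum_{\tau\in T_B(G'_\Omega)}W(\tau)\big)\sum_{j\in N_\eta}\Lambda_{0,j} + \sum_{l\notin N_B}\sum_{\tau\in T_B(G'_\Omega,\eta,l)}W(\tau)\Lambda_{0,l}$ and dividing through by $\sum_{\tau\in T_B(G'_\Omega)}W(\tau)$ yields exactly the stated formula for $c_\eta$. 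Substituting $\Lambda_\infty = \sum_\eta c_\eta\Lambda_{\eta,\infty}$ and expanding $\Lambda_{\eta,\infty}$ via its tree formula (the normalizing denominator $\sum_{j\in N_\eta}\sum_{\tau\in T_j}W(\tau)$ is absorbed into the definition of $c_\eta$ as written, or one can carry it explicitly) gives $\rho_\infty = \sum_\eta \sum_{j\in N_\eta}\sum_{\tau\in T_j(G_\Omega)} c_\eta W(\tau)\pi_j$, which is the claim; note $T_j(G_\Omega)=T_j(G^{(\eta)}_\Omega)$ for $j\in N_\eta$ since no tree rooted in a basin can use edges leaving it.

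The main obstacle I anticipate is \emph{not} the algebraic bookkeeping but justifying the convergence $\Lambda(t)\to\Lambda_\infty$ rigorously — i.e.\ ruling out persistent oscillations in the non-basin sector and confirming that the non-basin block $\tilde\Omega$ contributes no purely-imaginary eigenvalues. One clean way is to note that $-\Omega$ restricted to $\mathbb{R}^n/\ker$ is a sub-stochastic-type generator: the block-triangular form plus the strict column-sum deficit on non-basin blocks means $e^{\Omega t}$ maps the simplex into itself and the non-basin coordinates satisfy $\tfrac{d}{dt}\|\Lambda_{NB}\|_1 \le -\delta\|\Lambda_{NB}\|_1$ for some $\delta>0$ coming from the minimal outgoing weight, forcing $\Lambda_{NB}(t)\to 0$ exponentially; within each basin the irreducible Laplacian flow has algebraically simple zero eigenvalue and all others with strictly negative real part by Perron–Frobenius applied to $e^{\Omega^{(\eta)}t}$ (a stochastic matrix for each $t$). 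Since the conserved-quantity argument pins down the limit uniquely once convergence is known, I would state this convergence as a short lemma (or cite the Baumgartner–Narnhofer structure recalled in Section II.C, under which the $\pi_j$-commuting hypothesis forces the dissipation blocks to be exactly the basins) and then present the identification of $c_\eta$ via $\kappa'_\eta$ as the substance of the proof.
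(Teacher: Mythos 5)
Your identification of the coefficients $c_\eta$ --- imposing the conserved quantities $\kappa_\eta'^T\Lambda(t)=\kappa_\eta'^T\Lambda_0$ in the asymptotic limit, using that $\kappa_\eta'$ is constant on $N_\eta$ and vanishes on the other basins, and dividing by $\sum_{\tau\in T_B(G_\Omega')}W(\tau)$ --- is exactly the paper's proof of this corollary. The extra material you supply (stationarity of the $\pi_j$ under the commutation hypothesis, and convergence of $\Lambda(t)$ onto $\ker\Omega$ via the Hurwitz non-basin blocks and the simple zero eigenvalue of each basin block) is left entirely implicit in the paper, whose proof consists only of the $\kappa_\eta'$ computation; so you have reproduced the same argument with the convergence step made explicit, and your observation that the stated $\rho_\infty$ needs the per-basin normalizing denominator is a correct reading of a gap in the statement itself.
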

\begin{proof}
We set $\kappa_\eta'^T \left(\sum_{\eta'=1}^{n_B}c_\eta \Lambda_{\eta,\infty}\right) = \kappa_\eta'^T\Lambda_0$. This yields:
\begin{align*}
&\left(\sum_{\tau\in T_{B}(G_\Omega')}W(\tau)\right) c_\eta\sum_{j\in N_\eta}\Lambda_{\eta,\infty, j} = \\
&\left(\sum_{\tau\in T_{B}(G_\Omega')}W(\tau)\right) \sum_{j\in N_\eta} \Lambda_{0,j}   +\sum_{\substack{l \not\in N_B(G_\Omega) \\ \tau\in T_B(G_\Omega',\eta,l)}} W(\tau) \Lambda_{0,l}.  \\
&c_\eta = \sum_{j\in N_\eta} \Lambda_{0,j} + \frac{\sum_{\substack{l \not\in N_B(G_\Omega) \\ \tau\in T_B(G_\Omega',\eta,l)}} W(\tau) \Lambda_{0,l}}{\sum_{\tau\in T_{B}(G_\Omega')}W(\tau)}
\end{align*}
\end{proof}

\section{Asymptotic Behavior of GPM systems}

The final corollary of the preceding section might be considered of limited use, since it requires the system to start with stationary projectors. An arbitrary initial state will not have stationary projectors, and, in general, it may difficult to determine what the stationary projectors are. In this section, we look at the stationary projectors for GPM-systems, since they are often used to model Lindblad processes.

We start with the ``generic" case: 
\begin{proposition}
Suppose we have a diagonal Hamiltonian $H$, and $N$ Lindblad operators $L_\alpha=A_\alpha D_\alpha$, where $A_\alpha$ are permutation matrices and $D_\alpha$ are diagonal. Form the graph $G_\Omega$ corresponding to the diagonal projectors $|j\rangle\langle j|$, and define $P_\eta:= \sum_{j\in N_\eta}|j\rangle\langle j|$ and $P_B:=\sum_{\eta}^{n_B} P_\eta$. Then the decay space is the subspace corresponding to $I-P_B$. Moreover, $P_\eta$ commutes with $P_BHP_B$ and $P_BL_\alpha P_B$ for any $\alpha$.
\end{proposition}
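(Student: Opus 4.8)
The plan is to analyze the structure of $G_\Omega$ in the GPM case directly from the explicit form of the Lindblad superoperator on diagonal projectors, namely $\mathcal{L}(\pi_j^D) = \sum_\alpha |D_{\alpha,j}|^2 (\pi_{\sigma_\alpha(j)}^D - \pi_j^D)$, which was derived earlier in the excerpt. This shows the only edges $\overleftarrow{\sigma_\alpha(j)\,j}$ in $G_\Omega$ come from the permutations $\sigma_\alpha$ with nonzero weight $\sum_{\alpha: \sigma_\alpha(j)=k} |D_{\alpha,j}|^2$. First I would invoke the Baumgartner--Narnhofer decomposition summarized in Section II.C: since the diagonal projectors are stationary, the asymptotic states are diagonal, and hence the decay space $\mathcal{D}$, the dissipation blocks $\mathcal{R}_j$, and the limit operators $\rho_{\infty,j}$ must all be ``diagonal'' objects, i.e. spanned by subsets of $\{|j\rangle\}$. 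So everything reduces to identifying which index sets these subsets are.

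The key step is to argue that the decay space corresponds precisely to $I - P_B$, i.e. to the non-basin vertices $V\setminus N_B$. One direction: by the rank proposition and the stationary-orbit theorem, any diagonal initial $\Lambda_0$ flows to a combination of the $\Lambda_{\eta,\infty}$, each of which is supported entirely on $N_\eta\subseteq N_B$; so the population on every non-basin vertex tends to zero for every initial state, forcing $V\setminus N_B \subseteq \mathcal{D}$ (using that the relevant dynamics on the diagonal is exactly governed by $\Omega$). The reverse inclusion uses maximality of $\mathcal{D}$: no sector of $\mathcal{R}$ can have vanishing population for all initial states, and on each basin $\Lambda_{\eta,\infty}$ is full-rank on $N_\eta$ (its components are sums over nonempty tree sets $T_j(G_\Omega^{(\eta)})$, all strictly positive by strong connectedness), so starting inside a single basin the limit has full support there — hence $N_B$ cannot meet $\mathcal{D}$. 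This pins down $\mathcal{D}$ and identifies the $\mathcal{R}$-part with $P_B$, with the dissipation blocks being (unions of) the basins $N_\eta$.

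For the commutation claims, note $P_BHP_B$ is diagonal because $H$ is, so $P_\eta$ — also diagonal — commutes with it automatically. For $P_B L_\alpha P_B = P_B A_\alpha D_\alpha P_B$: I would show that $P_\eta A_\alpha P_\eta = P_\eta A_\alpha P_B = P_B A_\alpha P_\eta$, i.e. that $A_\alpha$ does not move any basin vertex out of its own basin nor into a basin from outside, once restricted by $P_B$. This is the crux: if $j\in N_\eta$ then the edge $\overleftarrow{\sigma_\alpha(j)\,j}$ lies in $G_\Omega^{(\eta)}$ unless its weight vanishes, but a basin has no outgoing edges, so either $\sigma_\alpha(j)\in N_\eta$ or the corresponding diagonal entry $D_{\alpha,j}=0$, i.e. $D_\alpha P_\eta = P_\eta D_\alpha P_\eta$ effectively and $A_\alpha$ maps $\mathrm{supp}(D_\alpha P_\eta)$ into $N_\eta$; conversely, since $A_\alpha$ is a permutation it is injective, so nothing from outside $N_\eta$ lands on the image $A_\alpha\,\mathrm{supp}(D_\alpha P_\eta)$. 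Combining these gives $P_\eta (P_B L_\alpha P_B) = P_\eta A_\alpha D_\alpha P_\eta = (P_B L_\alpha P_B) P_\eta$.

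The main obstacle I anticipate is the careful bookkeeping around the relaxed GPM convention (empty rows/columns, so $D_\alpha$ can have zeros): one must make sure that a zero diagonal entry $D_{\alpha,j}=0$ genuinely kills the potential edge out of $j$ (it does, since the weight is $\sum_\alpha |D_{\alpha,j}|^2$ over the relevant $\alpha$) and, in the reverse direction, that the permutation matrix $A_\alpha$ — which is still a genuine permutation by Proposition~\ref{GPMzeros} — cannot smuggle population into $N_\eta$ from a non-basin vertex $l$ with $D_{\alpha,l}\neq 0$ and $\sigma_\alpha(l)\in N_\eta$; this would correspond to an edge \emph{into} the basin, which is allowed in $G_\Omega$, but after conjugating by $P_B$ the term $\langle k|P_B L_\alpha P_B|l\rangle$ with $k\in N_\eta$, $l\notin N_B$ is zero because $l\notin N_B$ means $|l\rangle\langle l|$ is projected out on the right — so it does not obstruct $[P_\eta, P_B L_\alpha P_B]=0$. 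Making this last point airtight, i.e. that the edges into a basin are exactly the ones sitting between the $P_B$'s and thus harmless, is the delicate part of the argument.
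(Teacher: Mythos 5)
Your proposal is correct and follows essentially the same route as the paper: identify the decay space via stationarity of the diagonal projectors together with the stationary-orbit theorem (non-basin populations vanish, basin limits are full rank), then obtain the commutation claims from the block-diagonality of $P_B L_\alpha P_B$ with respect to the basins (no outgoing basin edges; incoming edges removed by the projections) and the diagonality of $H$. Your extra bookkeeping about $\sigma_\alpha$, the zeros of $D_\alpha$, and the harmlessness of edges into the basins just makes explicit what the paper's proof states more briefly.
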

\begin{proof}
We have shown in subsection IIB that the projectors $|j\rangle\langle j|$ are stationary. Therefore if $\rho(0)$ is an arbitrary superposition of such projectors, by Theorem III.3., the system will approach a superposition of $\sum_{j\in N_\eta}\Lambda_{\eta,\infty,j}|j\rangle\langle j|$. Each of these component states are full-rank on the images of the corresponding $P_\eta$. Since there is no asymptotic piece in the image of $I-P_B$, we can conclude that this space is inside the decay space. Since there are asymptotic states that are full-rank in the image of $P_B$, we can conclude the decay space is no bigger than the image of $I-P_B$.

Moreover, $P_BL_\alpha P_B$ is block-diagonal with respect to $P_\eta$: each block is a basin, and therefore there cannot be edges between them. The non-basin sector has been projected away, so there are no in-going edges to any of the basins. Hence $[P_\eta, P_BL_\alpha P_B]=0$. 

$H$, $P_\eta$, and $P_B$ are all diagonal, so $[P_\eta, P_BH P_B]=0$, and we are finished.
\end{proof}

The question arises: are these the only stationary projectors? The answer is that usually they are, but there are cases where symmetries ``hide" other enclosures (and possibly oscillations between enclosures). There are three cases: (1) when $P_\eta$ can be written as a sum of smaller commuting projectors, in which case there are smaller enclosures, (2) when there is a continuous transformation between two (or more) different $P_\eta$ where each transient projector commutes with the Lindblad and Hamiltonian operators, in which case there is a continuum of enclosures, and (3) there is a continuum of projectors that commute with $P_BL_\alpha P_B$, and an altered Hamiltonian $P_BHP_B-h_1P_{\eta_1} - h_2P_{\eta_2}$, in which case there are oscillations. In the first two cases, we will refer to spaces corresponding to commuting projectors other than $P_\eta$ as \emph{hidden enclosures}. 

Before stating the conditions for hidden enclosures, we define some objects. The symmetry will be described by an equivalence relation $\sim$ on a subset $N_\sim$ of $\mathbb{Z}_n$. This relation will be \emph{uniform}, by which we mean that all equivalence classes will have the same number of elements. $\mu_\sim$ will be the number of equivalence classes, and $\nu_\sim$ will be the number of elements in each class, so that $|N_\sim| = \mu_\sim\nu_\sim$.

Besides the symmetry, hidden enclosures also require a type of resonance among the Lindblad operators to exist. To describe this resonance, we define the \emph{induced coherence graph} $G_\sim$:
\begin{itemize}
\item The vertices are double-indexed and represent the coherences of equivalent states: $N_{C}:=\mathcal{V}(G_\sim)=\{(j,k): j,k\in N_\sim; j\sim k, j\ne k\}$ 
\item Edges are directed; self- and multi-edges are allowed: $\mathcal{E}(G_\sim):=\{(j_1,k_1)\rightarrow_\alpha (j_2,k_2): \exists \alpha, \sigma_\alpha(j_1) = j_2, \sigma_\alpha(k_1) = k_2, D_{\alpha,j_1}, D_{\alpha,k_1}\ne 0\}$.
\item The weights are $U(1)$-valued: $w(j_1,k_1,j_2,k_2,\alpha):= \frac{D_{\alpha,j_1}D_{\alpha,k_1}^*}{|D_{\alpha,j_1}||D_{\alpha,k_1}|}$. 
\end{itemize}
An induced coherence graph is \emph{resonant} if and only if there exists a function $f_C: N_C\times N_C \rightarrow U(1)$ such that:
\begin{enumerate}
\item $f_C$ is transitive in the sense that $f_C((j_1,k_1),(j_2,k_2))f_C((j_2,k_2),(j_3,k_3)) = f_C((j_1,k_1),(j_3,k_3))$ for any triplet of vertices in $G_\sim$. This also implies that $f_C((j,k),(j,k))=1$ and $f_C((j_1,k_1),(j_2,k_2)) = f_C((j_2,k_2),(j_1,k_1))^*$.
\item The function matches the weights for any edge in $G_\sim$: $f_C((j_1,k_1),(j_2,k_2)) = w(j_1,k_1,j_2,k_2,\alpha)$, regardless of $\alpha$.
\end{enumerate}
For example, any self-edge must have weight one, which means that if $L_\alpha$ has two or more $1$-cycles whose elements are equivalent, their phases must match. This is an example of inter-cycle resonance. Let us define a cycle-averaged phase as:
\begin{align*}
\bar{\theta}_{\alpha,\gamma} := \frac{1}{m_{\alpha,\gamma}}\sum_{j\in N_{\alpha,\gamma}} \arg(D_{\alpha,j}).
\end{align*}
If we have two cycles in a Lindblad operator such that $N_{\alpha,\gamma_1}, N_{\alpha,\gamma_2} \subset N_\sim$, and the equivalence classes straddle these cycles (\emph{i.e.} each node in the first cycle is equivalent to some node in the second), then resonance of $G_\sim$ implies that:
\begin{align*}
\bar{\theta}_{\alpha,\gamma_1} = \bar{\theta}_{\alpha,\gamma_2} \mod \left(\frac{2\pi}{\lcm(m_{\alpha,\gamma_1}, m_{\alpha,\gamma_2})}\right)
\end{align*}

Here is a lemma we will require:
\begin{lemma}
The weakly connected components of $G_\sim$ are strongly connected. That is, if there is a path from one node to another, there must be a path in the opposite direction.
\end{lemma}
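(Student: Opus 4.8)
The plan is to reduce the statement to the claim that the head of every edge of $G_\sim$ can be joined back to its tail by a directed path (equivalently, every edge lies on a directed closed walk). This reduction is elementary: given vertices $x,y$ in one weakly connected component, take an underlying undirected path from $x$ to $y$; each of its edges is a directed edge in one of the two orientations, and whenever it points ``backwards'' with respect to the traversal, replace it by the directed path from its head to its tail. Concatenating gives a directed walk $x\rightsquigarrow y$, and symmetrically $y\rightsquigarrow x$; so the component is strongly connected and, in particular, a directed path $X\to Y$ forces one $Y\to X$.

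To produce the return path for an edge $(j_1,k_1)\to_\alpha(j_2,k_2)$ --- so $\sigma_\alpha(j_1)=j_2$, $\sigma_\alpha(k_1)=k_2$, $D_{\alpha,j_1},D_{\alpha,k_1}\neq 0$ --- I would walk around the orbit of $(j_1,k_1)$ under the bijection $(j,k)\mapsto(\sigma_\alpha(j),\sigma_\alpha(k))$. Because $\sigma_\alpha$ is a permutation of finite order, this orbit closes up: if $j_1,k_1$ lie in $\sigma_\alpha$-cycles of lengths $\ell,\ell'$, then $v_i:=(\sigma_\alpha^i(j_1),\sigma_\alpha^i(k_1))$ satisfies $v_0=v_m=(j_1,k_1)$ with $m=\lcm(\ell,\ell')$, and $v_1=(j_2,k_2)$. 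It then remains to check that each consecutive pair $(v_i,v_{i+1})$ is again an $\alpha$-edge, which amounts to two things holding all along the orbit: each $v_i$ is a vertex of $G_\sim$ (so $\sigma_\alpha^i(j_1)\sim\sigma_\alpha^i(k_1)$ and both lie in $N_\sim$), and $D_{\alpha,\sigma_\alpha^i(j_1)}\neq0$, $D_{\alpha,\sigma_\alpha^i(k_1)}\neq0$. The first holds because the symmetry encoded by $\sim$ --- in particular the $\sigma_\alpha$-invariance of the relation and of $N_\sim$ --- keeps the orbit inside $N_C$. For the second I would use Proposition~\ref{GPMzeros}: in the normalized decomposition each $\sigma_\alpha$-cycle contains at most one index with vanishing diagonal entry, and since $|D_\alpha|$ is constant on $\sim$-classes while $\sigma_\alpha$ permutes those classes, the moduli $|D_{\alpha,\sigma_\alpha^i(j_1)}|$ are periodic along the cycle of $j_1$ with the period of the induced cycle of classes; a zero anywhere in that cycle would then recur at another index of the same cycle, contradicting the ``at most one zero'' normalization, unless the class-cycle has full length, in which case the uniformity of $\sim$ together with $D_{\alpha,j_1}\neq0$ pins the single possible zero away from the orbit. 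The bound for $k_1$ follows since its moduli coincide with those of $j_1$. Hence $v_0\to v_1\to\cdots\to v_m=v_0$ is a genuine directed cycle of $G_\sim$ through the chosen edge.

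The finite-order observation and the reduction in the first paragraph are routine. The step I expect to be the real obstacle is the bookkeeping in the second paragraph: showing that a return path cannot be broken by a Lindblad operator that happens to possess a zero in the relevant $\sigma_\alpha$-cycle. This is where one must lean on the precise structural hypotheses attached to $\sim$ --- its uniformity, the $\sigma_\alpha$-invariance of both $\sim$ and $N_\sim$, and the constancy of $|D_\alpha|$ on classes --- and, if necessary, on the fact that the vertices of $G_\sim$ sit over the recurrent part of $G_\Omega$, whose basins are strongly connected.
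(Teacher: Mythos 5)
Your reduction to ``every edge lies on a directed closed walk'' is fine and standard, but the construction of that closed walk has a genuine gap: you try to close the cycle using only the \emph{same} operator $\alpha$ that produced the edge, i.e.\ by iterating $(j,k)\mapsto(\sigma_\alpha(j),\sigma_\alpha(k))$ around the orbit of $(j_1,k_1)$. This fails because the $\sigma_\alpha$-cycle through $j_1$ may genuinely contain one index with vanishing diagonal entry (exactly the ``unary tree'' case permitted by Proposition~\ref{GPMzeros}), and that index can lie on the orbit you traverse. Your periodicity argument only excludes a zero when the induced cycle of equivalence classes is strictly shorter than the $\sigma_\alpha$-cycle; in the full-length case the claim that uniformity ``pins the single possible zero away from the orbit'' is unsupported and in fact false, since the orbit of $j_1$ under $\sigma_\alpha$ visits every index of its cycle. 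Concretely, take $N_\sim=\{1,2,3,4\}$ with $1\sim2$, $3\sim4$, and two operators with $\sigma_\alpha=\sigma_\beta=(13)(24)$, $D_\alpha=\mathrm{diag}(1,1,0,0)$, $D_\beta=\mathrm{diag}(0,0,1,1)$: the basins are $\{1,3\}$ and $\{2,4\}$, the dissipation symmetry holds, the edge $(1,2)\rightarrow_\alpha(3,4)$ exists, but $(3,4)\rightarrow_\alpha(1,2)$ does not, since $D_{\alpha,3}=D_{\alpha,4}=0$; the return edge exists only via $\beta$.

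This is precisely why the paper's proof does not stay inside a single operator. It uses the strong connectivity of the basins of $G_\Omega$ to produce a return word $\tau'$ in the permutations (generally mixing several $\alpha$'s) with $\tau'(j_2)=j_1$ and all $D$-elements nonzero en route; nonvanishing is then inherited by the parallel path started at $k_2$ via the symmetry $|D_{\alpha,j}|=|D_{\alpha,k}|$ for $j\sim k$. Since $\tau'$ need not send $k_2$ to $k_1$, the paper then iterates the composite loop $\tau'\circ(\tau\circ\tau')^l$, which fixes $j_1$ for every $l$ and permutes the candidate images of $k_2$ cyclically within its equivalence class, so some power lands on $k_1$. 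To repair your argument you would need to import both ingredients: the freedom to change operators on the way back, and the finite-order trick applied to the composite loop rather than to $\sigma_\alpha$ alone.
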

\begin{proof}
Consider a path between nodes $(j_1,k_1)$ and $(j_2,k_2)$. We can represent it as a product $\tau$ of $n_\tau$ permutations $\tau=\overleftarrow\Pi_{j=1}^{n_\tau}\sigma_{\alpha_j}$, where the arrow indicates the product order is the reverse of the conventional order. We have $j_2=\tau(j_1)$ and $k_2=\tau(k_1)$. For the path to exist, the $D$-elements along the way must be non-zero, \emph{i.e.} $D_{\alpha_j, \tilde{j}_j}, D_{\alpha_j, \tilde{k}_j}\ne 0$, where $\tilde{j}_j = \sigma_{\alpha_j}\circ\cdots\circ\sigma_{\alpha_1}(j_1)$ and similarly for $\tilde{k}_j$. 

Since all nodes $j$ and $k$ are in basins, which are strongly connected, there is a path $\tau'$ that leads from $j_2$ back to $j_1$: $\tau'(j_2)=j_1$. Similar to $\tau$, the  $D$-elements must be non-zero en route. Now, $\tau'(k_2)$ may not equal $k_1$, so we cannot use $\tau'$ to generate the reverse path in $G^\sim$. However, $\tau'$ will generate a path from $k_2$ to some other node  that is in the same equivalence class as $j_2$ and $k_2$. This is because, as we apply permutations iteratively to $k_2$, the $D$-elements along the way are non-zero, so there is an edge for each permutation. The reason we know the elements are non-zero is because each node en route is equivalent to the corresponding node in the path from $j_2$ to $j_1$. Since we know the $D$-elements in that path are non-zero, and the $D$-elements in $k_2$'s are equivalent, the symmetry assures the latter are non-zero.

So, $\tau'$ cannot lead us back to $k_1$ directly, but if we consider products of the form $\tau'(\tau\tau')^l$, we will get there. Note that $\tau'\circ(\tau\circ\tau')^l(j_2) = j_1$, for any exponent $l$. Moreover, the products $\tau'\circ(\tau\circ\tau')^l(k_2)$ permute $k_2$ cyclically, and because $\tau'\circ(\tau\circ\tau')^{-1}(k_2)=k_1$, $k_1$ must be in that cycle. If the cycle has length $n'$, then we must have $\tau'\circ(\tau\circ\tau')^{n'-1}(k_2)=k_1$. Does this product represent a path in $G_\sim$? Yes, because $\tau$ applied to $k_1$ and $\tau'$ applied to $k_2$ represent paths, and if they are applied to nodes equivalent to these nodes, the symmetry assures the $D$-elements are non-zero. These products respect the equivalence structure, so we have a path.

In summary, since $\tau'\circ(\tau\circ\tau')^{n'-1}$ maps $j_2$ to $j_1$, and $k_2$ to $k_1$, and all $D$-elements en route are non-zero, we have constructed a path in $G_\sim$ from $(j_2,k_2)$ to $(j_1, k_1)$. 
\end{proof}
In the theorem, we refer to connected components of $G_\sim$; because strong and weak connectedness are equivalent, we do not need to qualify the type of connectedness.

Now we can state necessary and sufficient conditions for hidden enclosures:
\begin{theorem}
There is a hidden enclosure in a GPM-system if and only if there is a uniform equivalence relation $\sim$ with $\nu_\sim \ge 2$ such that the following properties hold:
\begin{description}
\item[Hamiltonian symmetry] $j\sim k \implies H_j = H_k$.
\item[Dissipation symmetry] for any $\alpha$, $j\sim k \implies |D_{\alpha,j}| = |D_{\alpha,k}|$. Additionally, $j\sim k \implies \sigma_\alpha(j) = \sigma_\alpha(k)$ whenever $\sigma_\alpha(j), \sigma_\alpha(k) \in N_B$.
\item[Dissipation resonance] The induced coherence graph $G_\sim$ has a connected component that is resonant.
\end{description}
\end{theorem}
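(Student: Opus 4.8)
The plan is to prove the two implications separately, in each case translating between the combinatorial data $(\sim, G_\sim, f_C)$ and the operator-algebraic description of enclosures recalled in Section~II\,C (recall that for a GPM system $P_B$ spans the non-decay space $\mathcal{R}$, by the preceding proposition). For sufficiency I would build a hidden enclosure explicitly out of the symmetry; for necessity I would start from a Baumgartner--Narnhofer hidden enclosure --- a commuting projector $Q$, or a partial isometry $U$ in the oscillation case --- and read the equivalence relation and its three properties off the support and phases of $Q$ (resp.\ $U$).

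\emph{Sufficiency.} Fix a connected component $C$ of $G_\sim$ on which the resonance function exists. Since $f_C$ is transitive, it is a coboundary on $C$: choosing a base vertex $(j_0,k_0)$ and setting $\phi(j,k):=f_C((j_0,k_0),(j,k))$ gives $f_C((j,k),(j',k'))=\overline{\phi(j,k)}\,\phi(j',k')$. Using these phases I would assemble, inside the span of the basin eigenvectors $|j\rangle$ that occur in $C$, a new orthonormal family of ``coherent'' vectors formed by phase-corrected superposition over equivalence classes; uniformity of $\sim$ makes this a genuine change of basis of the relevant subspace. The Hamiltonian symmetry makes the diagonal operator $P_BHP_B$ constant on each class, so it acts cleanly on the coherent vectors; the modulus part of the dissipation symmetry together with the edge-matching property of $f_C$ makes each GPM $P_BL_\alpha P_B$ permute the coherent vectors up to phases prescribed by $f_C$, which the coboundary form renders mutually consistent. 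Hence the projector onto a suitable span of coherent vectors commutes with $P_BHP_B$ and every $P_BL_\alpha P_B$ and is not one of the $P_\eta$. According to whether that span lies inside a single $P_\eta$ or straddles several basins, and whether those basins share a common value of the diagonal of $P_BHP_B$, one lands in case (1), (2) or (3) above; in the last case one repackages the construction as a partial isometry $U$ with $U+U^\dagger$ intertwining the shifted Hamiltonian, yielding oscillations.

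\emph{Necessity.} Conversely, assume a hidden enclosure exists. By Section~II\,C this produces either a projector $Q\ne P_\eta$ with $[Q,P_BHP_B]=[Q,P_BL_\alpha P_B]=0$ for all $\alpha$, or, in the oscillation case, a partial isometry $U$ with the analogous intertwining relations; I would do the projector case in detail and the isometry case by the same entrywise computation applied to $U$. Working in the Hamiltonian eigenbasis, define $j\sim k$ for $j,k\in N_B$ by $Q_{jk}\ne 0$. Self-adjointness of $Q$ gives symmetry; $Q^2=Q$ together with the GPM structure forces transitivity, and a rank/counting argument on $Q$ forces uniformity. The off-diagonal entries of $[Q,P_BHP_B]=0$ read $Q_{jk}(H_j-H_k)=0$, i.e.\ Hamiltonian symmetry. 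For each $\alpha$, writing $L_\alpha=A_\alpha D_\alpha$, the entries of $[Q,P_BL_\alpha P_B]=0$ relate $Q_{jk}$ to $Q_{\sigma_\alpha(j)\sigma_\alpha(k)}$ times $D_{\alpha,j}D_{\alpha,k}^{\,\dagger}$: comparing moduli gives $|D_{\alpha,j}|=|D_{\alpha,k}|$ and the compatibility of $\sigma_\alpha$ with $\sim$ on $N_B$, i.e.\ dissipation symmetry; comparing phases shows $f_C((j,k),(j',k')):=\overline{Q_{jk}/|Q_{jk}|}\,(Q_{j'k'}/|Q_{j'k'}|)$ is well defined on each component of $G_\sim$, transitive by construction, and equal to the edge weights --- precisely dissipation resonance.

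\emph{Main obstacle.} The entrywise manipulation of the commutators is routine; the delicate points are: (i) deducing uniformity of $\sim$ and transitivity of the phase function directly from $Q^2=Q$, without presupposing a block form for $Q$; (ii) verifying that the combinatorial graph $G_\sim$ built from the $\sigma_\alpha$ and $D_\alpha$ is exactly the object whose resonance matches the global consistency of $Q$'s phases --- here the connectedness lemma above is what allows a locally defined $f_C$ to be globalized; and (iii) reconciling the three Baumgartner--Narnhofer scenarios, especially the partial-isometry/oscillation case, with the single equivalence-relation statement, together with the bookkeeping that separates $N_\sim$ from the decay space.
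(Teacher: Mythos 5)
Your overall architecture (necessity by reading $\sim$ and $f_C$ off a commuting projector $Q$; sufficiency by an explicit construction from the resonance data) matches the paper's in outline, but the two steps you flag as ``delicate'' are exactly where the paper has to work hardest, and one of your claims is false as stated. The assertion that ``$Q^2=Q$ together with the GPM structure forces transitivity'' of the relation $Q_{jk}\ne 0$ does not hold: since $Q_{jk}=\langle Qe_j,Qe_k\rangle$, the relation is non-orthogonality of the projected basis vectors, which is not transitive for general projectors --- e.g.\ $Q=B^\dagger B$ with $B=\bigl(\begin{smallmatrix}1/\sqrt2 & 1/2 & 0 & 1/2\\ 0 & 1/2 & 1/\sqrt2 & -1/2\end{smallmatrix}\bigr)$ satisfies $BB^\dagger=I$, hence $Q^2=Q$, yet $Q_{12},Q_{23}\ne 0$ while $Q_{13}=0$ --- and you offer no mechanism by which the GPM structure restores it. The paper does not claim transitivity of the support relation: it defines $j\approx k\iff P_{jk}\ne 0$, states explicitly that $\approx$ need not be transitive, and passes to the transitive closure $\sim$, noting that the required properties ($|D_{\alpha,j}|=|D_{\alpha,k}|$, $H_j=H_k$, compatibility with $\sigma_\alpha$) are equalities and therefore survive the closure. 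You need the same device.

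This creates a second, related gap: your definition $f_C((j,k),(j',k')):=\overline{Q_{jk}/|Q_{jk}|}\,(Q_{j'k'}/|Q_{j'k'}|)$ presupposes that $Q_{jk}\ne 0$ at \emph{every} vertex of the relevant component of $G_\sim$ --- precisely what is no longer automatic once $\sim$ is a transitive closure. The paper obtains this by a different route: it writes the linear ODE for the coherences, encodes it in the matrix $A^\sim$, and runs a maximum-modulus argument on a kernel vector of $A^{\sim\dagger}$ (the adjoint is essential, because only there do the off-diagonal row weights exactly balance the diagonal entry), concluding that every component of the kernel vector has modulus one on a strongly connected component of $G_\sim$ and that its phases telescope along paths; resonance in both directions then falls out of one computation. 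Without a substitute for this step your necessity argument does not close. Your sufficiency sketch (coherent vectors built from the coboundary $\phi$) is a genuinely more constructive route than the paper's, essentially the basis change the paper performs in its Example VI.C, and it can be made to work --- but you still owe the check that the orthogonal complement of the coherent span is also invariant under each $P_BL_\alpha P_B$, since invariance of the range alone does not give $[Q,P_BL_\alpha P_B]=0$. Finally, the theorem covers only hidden enclosures; oscillations (your case (3)) are deferred to the corollary, so the partial-isometry bookkeeping you worry about is not needed here.
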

\begin{proof}
We first show that the existence of $\sim$ is necessary. Let $P$ be any projector that commutes with $P_BHP_B$ and $P_BL_\alpha P_B$ and is neither a basin projector $P_\eta$ nor a sum thereof (but whose image is contained in the image of $P_B$). $P$ must have at least one off-diagonal element. If $P$ were diagonal, we would have: 
\begin{align*}
\sum_{j\in N_B} [P_{jj}|j\rangle\langle j|, L_\alpha] &= 
\sum_{j\in N_B} P_{jj}\left(D_{\alpha,\sigma_\alpha^{-1}(j)} |j\rangle\langle \sigma^{-1}_\alpha(j)| \right.  \\ & \hspace{1in} - D_{\alpha,j}|\sigma_\alpha(j)\rangle\langle j|\Big)   \\
&\hspace{-0.5in}= \sum_{j\in N_B} D_{\alpha,j}|\sigma_\alpha(j)\rangle\langle j|\left( P_{\sigma_\alpha(j)\sigma_\alpha(j)} - P_{jj}\right). 
\end{align*}
For this to vanish, we require $P_{\sigma_\alpha(j)\sigma_\alpha(j)} = P_{jj}$ as long as $D_{\alpha,j}\ne 0$. Since a basin is strongly connected, all diagonal elements would be the same. Diagonal projectors can only have ones or zeros on the diagonal, which would give us a $P_\eta$ or a sum thereof, which contradicts our assumption. Therefore there is  at least one off-diagonal element $P_{j'k'}$.

Using similar logic as above, there is a recurrence relation for the off-diagonal elements:
\begin{align}
P_{\sigma_{\alpha}(j)\sigma_{\alpha}(k)} D_{\alpha,k}  = P_{jk} D_{\alpha,j}. \label{recur1}
\end{align}
Additionally, using the Hermitian property of projectors:
\begin{align}
P_{\sigma_{\alpha}(j)\sigma_{\alpha}(k)} D_{\alpha,j}^* &= (P_{\sigma_{\alpha}(k)\sigma_{\alpha}(j)} D_{\alpha,j})^* = (P_{kj} D_{\alpha,k})^*\nonumber \\ 
&= P_{jk} D_{\alpha,k}^* \label{recur2}
\end{align}
Using (\ref{recur1}) and (\ref{recur2}), we get:
\begin{align}
P_{jk}|D_{\alpha,j}|^2 &= P_{\sigma_{\alpha}(j)\sigma_{\alpha}(k)} D_{\alpha,k}D_{\alpha,j}^* \nonumber \\ &= P_{jk} D_{\alpha,k}^*D_{\alpha,k} \nonumber \\ 
P_{jk}|D_{\alpha,j}|^2 &= P_{jk}|D_{\alpha,k}|^2. \label{recur3}
\end{align}
Equation (\ref{recur3}) says that if $P_{jk}\ne 0$, then $|D_{\alpha,j}|=|D_{\alpha,k}|$. If $j'\in N_{\eta_j}$ and $k'\in N_{\eta_k}$, we define a binary relation $\approx$ on $N_\sim:=N_{\eta_j}\cup N_{\eta_k}$ by saying $j\approx k$ if and only if $P_{jk}\ne 0$. This relation obeys the symmetry property of binary relations due to the Hermitian property of projectors. It is also reflexive. Due to the strong connectivity combined with (\ref{recur1}), there is some pair $(j,k)$ for any $j\in N_\sim$ such that $j\approx k$. Either $k=j$ and therefore $j\approx j$, or $k\ne j$, which means that $P_{jk}\ne 0$, which means that $P_{jj}\ne 0$ and $j\approx k$, since diagonal entries of projectors dominate off-diagonal entries $|P_{jk}|^2 \le P_{jj}P_{kk}$. The binary relation may not be transitive, so it may not be an equivalence relation. However, we can easily extend the relation: define the equivalence relation $\sim$ such that $j\sim k$ if and only if there is a sequence $j \approx l_1 \approx l_2 \cdots \approx k$. It is clear that, due to the transitivity of equality, we have $j\sim k \implies |D_{\alpha,j}| = |D_{\alpha,k}|$. 

We now prove the preservation of equivalence classes (within basins) by the permutation $\sigma_\alpha$. Because of (\ref{recur1}), $j\sim k \implies \sigma_\alpha(j)\sim \sigma_\alpha(k)$ if $D_{\alpha,j},D_{\alpha,k}\ne 0$. It is possible however that the respective $D$-elements are zero. In this case, we can iterate the permutation backward. Relations (\ref{recur1}) and (\ref{recur2}) can be written $P_{jk}D_{\sigma_\alpha^{-1}(k)} = P_{\sigma_\alpha^{-1}(j)\sigma_\alpha^{-1}(k)}D_{\sigma_\alpha^{-1}(j)}$, and $P_{jk}D_{\sigma_\alpha^{-1}(j)}^* = P_{\sigma_\alpha^{-1}(j)\sigma_\alpha^{-1}(k)}D_{\sigma_\alpha^{-1}(k)}^*$. This implies that if $j \approx k$, then $\sigma_{\alpha}^{-1}(j)\approx\sigma_{\alpha}^{-1}(k)$ unless both $D_{\sigma_\alpha^{-1}}(j)$ and $D_{\sigma_\alpha^{-1}}(k)$ are zero. But due to Proposition \ref{GPMzeros}, we know that there can only be one zero $D$-element per cycle. Therefore if we go backward, we can't hit zero $D$-elements until we get to our starting point. It follows that the relation $\approx$ holds for each pair in the cycle, and also that both cycles must be the same length. Since the binary relation $\approx$ is preserved by permutations within basins, the extension $\sim$ must also be preserved.

The uniformity of $\sim$ follows. Since permutations are bijections, equivalence classes that are connected by permutations must have the same size. Since we have at most two basins, and our base equivalence class $[j'k']$ has elements in those basins (by definition), every element in $N_\sim$ is in an equivalence class that has the same size as $[j'k']$. And since the base equivalence class has at least two elements, $\nu_\sim \ge 2$. Note that while $\mu_\sim$ clearly divides $|N_\sim|$, it also must divide $|N_{\eta_j}|$ and $|N_{\eta_k}|$. 

This proves the necessary existence of a uniform equivalence relation with dissipation symmetry. The Hamiltonian symmetry is relatively straightforward. We require that $P$ commute with $P_BHP_B$:
\begin{align*}
[P_BHP_B,P] &= \sum_{j,k\in N_\sim} \left(H_j - H_k \right) P_{jk}|j\rangle\langle k|.
\end{align*}
This means that $j\approx k\implies H_j=H_k$. This implies in turn that $j\sim k\implies H_j = H_k$.

We now turn to resonance. To show the resonance of $G_\sim$ is necessary and sufficient, we will look at the ODE governing the coherences $\rho_{jk}$, $j\ne k$, $j\sim k$:
\begin{align*}
\dot{\rho}_{jk} &= \sum_{\alpha}\left(D_{\alpha,\sigma_\alpha^{-1}(j)}D_{\alpha,\sigma_\alpha^{-1}(k)}^* \rho_{\sigma_\alpha^{-1}(j)\sigma_\alpha^{-1}(k)} \right. \\
& \hspace{0.5in} + \left. \left(iH_k-iH_{j}-\frac{|D_{\alpha,j}|^2+|D_{\alpha,k}|^2}{2}\right)\rho_{jk} \right) \\
&= \sum_{\alpha}  \left( |D_{\alpha,\sigma_\alpha^{-1}(j)}|^2 w(\sigma_\alpha^{-1}(j),\sigma_\alpha^{-1}(k), j, k, \alpha)\times \right. \\
& \hspace{1in} \left. \rho_{\sigma_\alpha^{-1}(j)\sigma_\alpha^{-1}(k)} - |D_{\alpha,j}|^2 \rho_{jk} \right), 
\end{align*} 
where we have used the symmetries $H_j=H_k$ and $|D_{\alpha,j}| = |D_{\alpha,k}|$ since $j\sim k$. This is an ODE of dimension equal to $|N_C|$, the order of $G_\sim$. We then define matrices $A^\sim_\alpha$ and $A^\sim$ where the rows and columns are labelled according to the vertices of $G_\sim$, which has the following elements:
\begin{align*}
(A^\sim_\alpha)_{(j_1,k_1),(j_2,k_2)} &= |D_{\alpha,j_2}|^2 \times \\ 
&\left(\delta_{j_1,\sigma_\alpha(j_2)}\delta_{k_1,\sigma_\alpha(k_2)} w(j_2,j_1,k_2,k_1,\alpha) \right.\\
& \hspace{1in}- \delta_{j_1j_2}\delta_{k_1k_2}  \Big) \\
(A^\sim)_{(j_1,k_1),(j_2,k_2)} &= \sum_{\alpha=1}^N (A^\sim_\alpha)_{(j_1,k_1),(j_2,k_2)}.
\end{align*}
Note that we are ignoring coherences for $j\not\sim k$. If a coherence $\rho_{jk}$ where $j\not\sim k$ were to have a non-zero stationary solution, the projector element $P_{jk}$ would have to be non-zero, which would contradict the definition of $\sim$. 

Since hidden enclosures imply stationary off-diagonal elements of $\rho$ (and vice versa), the existence of hidden enclosures is equivalent to $A^\sim$ being rank-deficient and having a non-empty kernel. One may object that the elements of the kernel could correspond to non-Hermitian $\rho$, but because $(2LAL^\dagger -L^\dagger LA - AL^\dagger L)^\dagger = 2LA^\dagger L^\dagger -L^\dagger LA^\dagger - A^\dagger L^\dagger L$ for any operator $A$, any element in the kernel implies the existence of a stationary Hermitian $\rho$ with off-diagonal nonzero elements.

We will show that the resonance of a component of $G_\sim$ is equivalent to a non-empty kernel of $A^{\sim \dagger}$. The kernels of a matrix and its conjugate of course are not equivalent, but they must have the same dimension. The reason for working with the conjugate will become clear shortly.

First we note that $A^{\sim \dagger}$ is block-diagonal with blocks corresponding to the connected components of $G_\sim$. For this reason, we only need to consider vectors with support on individual blocks. Let $v$ be such a vector, and let $N_v\subseteq N_C$ be the indices of the relevant block. Choose a node $(\bar{\jmath},\bar{k}) \in N_v$ such that $|v_{(\bar{\jmath},\bar{k})}| = \sup_{(j,k)\in N_v}\{|v_{(j,k)}|\}$. Without loss of generality, we can set $v_{(\bar{\jmath},\bar{k})} = 1$.

Now let's look at what happens when the $(\bar{\jmath},\bar{k})$'th row of $A^{\sim\dagger}$ is multiplied by $v$:
\begin{align*}
\sum_{ (j,k)\in N_v} A^{\sim *}_{(j,k),(\bar{\jmath},\bar{k})}v_{(j,k)} &= \sum_{\alpha=1}^N | D_{\alpha, \bar{\jmath}} |^2 \times \\
&\hspace{-.8in}\left(w(\bar{\jmath},\bar{k},\sigma_\alpha(\bar{\jmath}),\sigma_\alpha(\bar{k}),\alpha)^* v_{\sigma_\alpha(\bar{\jmath}),\sigma_\alpha(\bar{k})}  -1 \right)
\end{align*}
Because $|v_{(j,k)}|\le 1$, for this equation to be satisfied, we have, for all $\alpha$ such that $D_{\alpha,\bar{\jmath}} \ne 0$:
\begin{align*}
w(\bar{\jmath},\bar{k},\sigma_\alpha(\bar{\jmath}),\sigma_\alpha(\bar{k}),\alpha)^* v_{\sigma_\alpha(\bar{\jmath}),\sigma_\alpha(\bar{k})}  =1,
\end{align*}
which implies that $|v_{\sigma_\alpha(\bar{\jmath}),\sigma_\alpha(\bar{k})}| = 1$ and $\arg(v_{\sigma_\alpha(\bar{\jmath}),\sigma_\alpha(\bar{k})}) = w(\bar{\jmath},\bar{k},\sigma_\alpha(\bar{\jmath}),\sigma_\alpha(\bar{k}),\alpha)$. 

Because of the strong connectedness of the component of $G_\sim$, this process can clearly be iterated to show that all elements of $v$ corresponding to $N_v$ have magnitude one. Moreover, for any path $(j_0,k_0)=(\bar{\jmath},\bar{k})\rightarrow_{\alpha_1} (j_1,k_1) \cdots \rightarrow_{\alpha_{n_p}}(\tilde{\jmath},\tilde{k})=(j_{n_p},k_{n_p})$, we have:
\begin{align*}
v_{(\tilde{\jmath},\tilde{k})} &= \prod_{l=1}^{n_p} w(j_{l-1}, k_{l-1}, j_l, k_l, \alpha_l).
\end{align*}
This equation essentially proves our theorem. If there is a function $f_C$ that matches the edge weights, regardless of $\alpha$, we can set
\begin{align*}
v_{(\tilde{\jmath},\tilde{k})} &= f_C((\bar{\jmath},\bar{k}), (\tilde{\jmath},\tilde{k})) \\
&= \prod_{l=1}^{n_p} f_C( (j_{l-1}, k_{l-1}), (j_l, k_l)) \\
&= \prod_{l=1}^{n_p} w(j_{l-1}, k_{l-1}, j_l, k_l, \alpha_l),
\end{align*}
and we have constructed a vector in the kernel of $A^\sim$. 

On the other hand, if there is no such function, the kernel must be empty. If we define a function $f_v$ on $N_v\times N_v$:
\begin{align*}
f_v( (j_1,k_1), (j_2,k_2) )&:= v_{(j_1,k_1)}^*v_{(j_2,k_2)},
\end{align*}
which must obey transitivity:
\begin{align*}
&f_v( (j_1,k_1), (j_2,k_2) ) f_v( (j_2,k_2), (j_3,k_3) ) \\
&\hspace{1.2in}= v_{(j_1,k_1)}^*v_{(j_2,k_2)}v_{(j_2,k_2)}^*v_{(j_3,k_3)} \\
&\hspace{1.2in}= v_{(j_1,k_1)}^*v_{(j_3,k_3)} \\
&\hspace{1.2in}= f_v( (j_1,k_1), (j_3,k_3) ).
\end{align*}
So a non-empty kernel guarantees the existence of the required function.
\end{proof}

\begin{remark}
We mentioned previously that there were two cases of hidden enclosures: (1) those contained in (and smaller than) the $P_\eta$ enclosures, and (2) continuous transformations between the $P_\eta$. When $j'$ and $k'$ are inside the same basin, this corresponds to the first case. It can be shown that there is one or more hidden enclosures of dimension $\mu_\sim < rank(P_\eta)$. 

When $j'$ and $k'$ are inside different basins, there will be some continuous transformation of commuting projectors, but not necessarily between $P_{\eta_j}$ and $P_{\eta_k}$. If the equivalence classes have one element each in $N_{\eta_j}$ and $N_{\eta_k}$, then the two basins are unitarily equivalent. Otherwise, one or both basins contain smaller hidden enclosures, and there is one (or possibly more) continuous transformation between hidden enclosures.
\end{remark}

\begin{remark}
From a practical standpoint, let us consider how to determine the existence of the required function $f_C$, given an arbitrary $GPM$-Lindblad system. There are two steps: (1) constructing $f_C$ using some subset of the phases of the Lindblad operators and (2) determining whether the remaining phases satisfy the $f_C$.

Even though the function has a domain of size $|N_C |^2 = O(\nu_\sim^4)$, the transitivity means the function only has $|N_C| - 1 = O(\nu_\sim^2)$ degrees of freedom. For some ordering of nodes $(j_1,k_1)$, $\dots$, $(j_{|N_C|},k_{|N_C|})$, we only need to store the values $f((j_1,k_1),(j_2,k_2))$, $f((j_1,k_1),(j_3,k_3))$, $\dots$, $f((j_{1},k_{1}), (j_{|N_C|},k_{|N_C|}))$. It is always possible to draw a non-self-intersecting path that intersects all nodes in some order, so we can compute the above function values in $|N_C|-1$ steps in a recursive manner:
\begin{align*}
f_C((j_{1}, k_{1}), (j_l,k_l)) &= f_C((j_{1}, k_{1}), (j_{l-1},k_{l-1})) \times \\
& f_C((j_{l-1}, k_{l-1}), (j_l,k_l)) \\
&\hspace{-0.8in} = f_C((j_{1}, k_{1}), (j_{l-1},k_{l-1})) \times \\ & \hspace{-0.5in}\exp\left(i\left(\arg(D_{\alpha_{l-1}, j_{l-1}})-\arg(D_{\alpha_{l-1}, k_{l-1}})\right)\right).
\end{align*}

The second step means taking the remaining edges, and ensuring they obey the phase function. For an arbitrary edge $(j_{l_1},k_{l_1})\rightarrow_{\alpha}(j_{l_2},k_{l_2})$, this means computing the equation:
\begin{align*}
&\exp\left(i\left(\arg(D_{\alpha,j_{l_1}})-\arg(D_{\alpha,k_{l_1}})\right)\right) \\
&\hspace{0.5in}= f_C((j_{l_1},k_{l_1}),(j_{l_2},k_{l_2})) \\
&\hspace{0.5in}= f_C((j_{1},k_{1}),(j_{l_1},k_{l_1}))^* f_C((j_{1},k_{1}),(j_{l_2},k_{l_2})).
\end{align*}

There are most $N|N_C|$ edges that need to be checked. 
Overall, there are at most $|N_C| - 1 + N|N_C| = O(N|N_C|) = O(N\nu_\sim^2)$ steps to this process.

Instead of checking the phases of a given set of Lindblad operators, a related problem is to write down how the subset of dependent phases (\emph{i.e.} those involved in the second step) depend on the free phases (\emph{i.e.} the phases involved in the first step). This is a linear algebra-type problem with a twist. The first equations that one writes down are in the form $e^{iAx} = e^{ib}$, where the $x$ is the vector of dependent phases. Of course when one takes the logarithm of both sides, one has a linear equation modulo $2\pi$. This means that there are multiple discrete solutions, as a term of $\frac{2\pi k}{n'}$ will appear. $n'$ will be some integer that depends on the connectivity of $G_\sim$. 
\end{remark}

The theorem covers the case of hidden enclosures, each of which has a unique full-rank asymptotic state. There may also be oscillations between these states, whether they are between hidden enclosures or the generic enclosures represented by the $P_\eta$'s. Say we have projectors $P_a$ and $P_b$, where either can be a hidden enclosure, or correspond to a basin $P_\eta$. These projectors have corresponding equivalence relations $\sim_a$ and $\sim_b$ on domain $N_a$ and $N_b$ respectively: if it corresponds to a hidden enclosure, it is the equivalence relation required by the theorem; otherwise, it is the trivial equivalence relation (\emph{i.e.} each equivalence class is a singleton).
\begin{corollary}
There are stable oscillations between dissipation blocks $a$ and $b$ if there is an equivalence relation $\asymp$ on $N_a\cup N_b$ such that
\begin{enumerate}
\item $\asymp$ restricted to $N_a$ or $N_b$ is identical to $\sim_a$ and $\sim_b$ respectively.
\item $\asymp$ obeys the dissipation symmetry and resonance properties of the theorem.
\item There is a constant $\Delta$ such that, for any $j\asymp k$, $j\in N_a$, $k\in N_b$, we have $H_j-H_k=\Delta$.
\end{enumerate}
\end{corollary}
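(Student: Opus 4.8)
The plan is to verify the Baumgartner--Narnhofer criterion for stable oscillations quoted in subsection II.C: assuming $a$ and $b$ are blocks of the dissipation-block decomposition \emph{refined} to include the hidden enclosures supplied by the Theorem (so each carries a unique full-rank asymptotic state $\rho_{\infty,a}$, $\rho_{\infty,b}$, and $P_BL_\alpha P_B$ is block diagonal with respect to the refined blocks), I would construct a partial isometry $U$ with $U^\dagger U = P_a$, $UU^\dagger = P_b$ and constants $h_a \ne h_b$ such that $U + U^\dagger$ commutes with every $P_BL_\alpha P_B$ and with $P_BHP_B - h_aP_a - h_bP_b$, and then invoke the second alternative of that dichotomy with $\omega_{ab} = h_a - h_b = \Delta$. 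For a block that is a basin, $\sim$ is trivial and $P = P_\eta$; for a hidden enclosure, $P = \sum_c \tfrac{1}{\nu}|\phi_c\rangle\langle\phi_c|$ is a sum of mutually orthogonal rank-one projectors indexed by the $\sim$-classes $c$, with $|\phi_c\rangle = \sum_{j\in c} u_j|j\rangle$, $|u_j| = 1$, the phase pattern furnished by the resonance function of $G_\sim$.

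First I would build $U$ out of $\asymp$. Uniformity of $\asymp$ together with property 1 forces every $\asymp$-class to meet $N_a$ in exactly one $\sim_a$-class $c$ and $N_b$ in exactly one $\sim_b$-class $c'$, which sets up a bijection $c\leftrightarrow c'$; in particular $\mathrm{rank}\,P_a = \mathrm{rank}\,P_b$, as is necessary for $U$ to exist. I then set $U := \sum_c \tfrac{e^{i\theta_c}}{\sqrt{\nu_a\nu_b}}\,|\phi^b_{c'}\rangle\langle\phi^a_c|$, where the overall phases $e^{i\theta_c}$ and the internal phases of $|\phi^a_c\rangle$, $|\phi^b_{c'}\rangle$ are exactly those dictated by the resonance function of the combined induced coherence graph $G_\asymp$ (whose existence is property 2). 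From $\langle\phi^a_c|\phi^a_d\rangle = \nu_a\delta_{cd}$ and $\langle\phi^b_{c'}|\phi^b_{d'}\rangle = \nu_b\delta_{cd}$ one gets $U^\dagger U = \sum_c \tfrac{1}{\nu_a}|\phi^a_c\rangle\langle\phi^a_c| = P_a$ and likewise $UU^\dagger = P_b$.

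The commutation with the Lindblad operators is the crux, and it is essentially a re-run of the Theorem with the commuting projector replaced by $U$. Since $P_BL_\alpha P_B$ is block diagonal and $U$ couples only $a$ to $b$, the operators $[U,P_BL_\alpha P_B]$ and $[U^\dagger,P_BL_\alpha P_B]$ live on complementary rectangular blocks (the $(b,a)$ and $(a,b)$ blocks), so $[U+U^\dagger,P_BL_\alpha P_B]=0$ is equivalent to $[U,P_BL_\alpha P_B]=0$ and its adjoint; reading these off matrix-element-wise yields recurrences for the entries of $U$ of exactly the form (\ref{recur1})--(\ref{recur2}), namely $U_{\sigma_\alpha(j)\sigma_\alpha(k)}D_{\alpha,k} = U_{jk}D_{\alpha,j}$ and its conjugate. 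These hold because the dissipation symmetry of $\asymp$ makes $\sigma_\alpha$ respect the classes and keeps $|D_{\alpha,\cdot}|$ constant on classes, the resonance function matches the edge weights $w(j,k,\sigma_\alpha(j),\sigma_\alpha(k),\alpha)$ so the phases propagate consistently along every path, and Proposition \ref{GPMzeros} (at most one vanishing $D$-element per cycle) lets one run the recurrence backwards through any zero site --- precisely the argument already used for (\ref{recur3}). For the Hamiltonian part I would take $h_a = \Delta$, $h_b = 0$: the Hamiltonian symmetry inside each block and property 3 give $P_BHP_B|\phi^a_c\rangle = H^a_c|\phi^a_c\rangle$ and $P_BHP_B|\phi^b_{c'}\rangle = (H^a_c-\Delta)|\phi^b_{c'}\rangle$, whence $[P_BHP_B,U] = -\Delta U$, while $P_aU = U$ and $UP_a = 0$ give $[P_a,U] = -U$, so $[\,P_BHP_B-\Delta P_a,\,U\,] = 0$ and, by conjugation, the same for $U^\dagger$.

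Finally, since $\Delta \ne 0$ we have $h_a \ne h_b$, so the Baumgartner--Narnhofer dichotomy yields the stable oscillations $R(-\Delta t)\tilde\rho\, R(\Delta t)\otimes\rho_\infty$ between blocks $a$ and $b$, with $\rho_\infty$ unitarily equivalent via $U$ to $\rho_{\infty,a}$ and $\rho_{\infty,b}$. (If instead $\Delta = 0$, then $\asymp$ satisfies all three hypotheses of the Theorem on $N_a\cup N_b$, and the same construction produces a commuting projector with $h_a = h_b$, i.e.\ a hidden enclosure / continuous transformation as in the Remark rather than an oscillation.) I expect the main obstacle to be the bookkeeping of the Lindblad step: checking that the resonance function of the \emph{straddling} graph $G_\asymp$ really does render all the phases appearing in $U$ mutually consistent along every closed path and across the zero-$D$ sites, and that property 1 pins $U$ down as a partial isometry onto exactly $P_a$ and $P_b$ rather than some larger or smaller subspace --- both are close to the corresponding steps in the Theorem but must be carried out afresh for a relation that crosses block boundaries.
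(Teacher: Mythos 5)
Your proposal is correct and rests on the same two pillars as the paper's proof: the Baumgartner--Narnhofer dichotomy quoted in Section II.C, and the Theorem's resonance machinery for producing an operator that commutes with all $P_BL_\alpha P_B$. The difference is one of direction and explicitness. The paper starts from the partial isometry $U$ supplied by the oscillation criterion, packages it into the projector $P_{ab}:=\tfrac{1}{2}(P_a+P_b+U+U^\dagger)$, invokes the Theorem to extract $\asymp$ with the dissipation symmetry and resonance properties, and then telescopes $H_j-H_k$ along a chain $j\asymp l_1\asymp\cdots\asymp k$ to obtain the constant $\Delta=h_b-h_a$ --- i.e.\ it argues in the necessity direction and relies on the Theorem's iff character to make the argument reversible. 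You instead argue the sufficiency direction the statement actually asserts: you build $U$ explicitly from the resonance function of the straddling coherence graph, check $U^\dagger U=P_a$, $UU^\dagger=P_b$ from the orthogonality of the class vectors, rerun the recurrences $U_{\sigma_\alpha(j)\sigma_\alpha(k)}D_{\alpha,k}=U_{jk}D_{\alpha,j}$ to get commutation with the Lindblad operators, and verify $[P_BHP_B-\Delta P_a,U]=0$ directly. Your version buys a genuinely constructive proof of the ``if'' claim (and correctly isolates the $\Delta=0$ degenerate case as a continuous transformation rather than an oscillation); the paper's version is shorter because it outsources the hard dissipative step entirely to the Theorem. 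One small slip: you wrote $P_aU=U$ and $UP_a=0$, but since the range of $U$ lies in block $b$ and its domain in block $a$ it is $P_aU=0$ and $UP_a=U$; the resulting commutator $[P_a,U]=-U$ and hence the cancellation against $[P_BHP_B,U]=-\Delta U$ are nevertheless as you state.
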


\begin{proof}
The key here is that oscillations in general Lindblad systems only exist when the Hamiltonian-less system has a continuous transformation between the dissipation blocks. In particular, the operator $P_{ab}:=\frac{1}{2}\left(P_a+P_b+U+U^\dagger\right)$ is a projector, where $U$ is the unitary operator from block $a$ to block $b$:
\begin{align*}
P_{ab}^2 &=
\frac{1}{4}\left(P_a^2+P_b^2+P_aP_b+P_bP_a + U^2+U^{\dagger 2}+UU^\dagger \right. \\
& \left. + U^\dagger U + P_a(U+U^\dagger)+ P_b(U+U^\dagger) \right. \\
&\left. + (U+U^\dagger)P_a+ (U+U^\dagger)P_b\right) \\
&= \frac{1}{4}\left(P_a + P_b + P_b + P_a + U^\dagger + U + U + U^\dagger\right) \\
&= P_{ab},
\end{align*}
and it commutes with the Lindblad operators (but not necessarily the Hamiltonian). The theorem guarantees the existence of the equivalence relation $\asymp$ on $N_{ab}:=N_a\cup N_b$, such that the dissipation symmetry and resonance properties are satisfied (but not necessarily the Hamiltonian symmetry). Since the projector is proportional to $P_a$ and $P_b$ when restricted to $N_a$ and $N_b$, respectively, the relation $\asymp$ when similarly restricted must match $\sim_a$ and $\sim_b$.

All that is left is to prove the last condition. Oscillations on Lindblad systems must satisfy the commutation of $U+U^\dagger$ with $H-h_aP_a-h_bP_b$ for some constants $h_a$, $h_b$, which means that the following must vanish:
\begin{align*}
[U+U^\dagger, H-h_aP_a-h_bP_b] &= [U+U^\dagger, H] + \\
& \hspace{0.2in} (h_a-h_b)(U-U^\dagger) \\
& \hspace{-1in} = [P_{ab}, H]  + (h_a-h_b)(U-U^\dagger),
\end{align*}
where we have used the fact that $P_a$ and $P_b$ must commute with $H$. Because of the equivalence relation, $P_{ab}$ only has non-zero elements $P_{ab,jk}$ when $j\asymp k$:
\begin{align*}
[P_{ab},H] &= \sum_{j,k \in N_{ab}, j\asymp k} P_{ab,jk}(H_j-H_k)|j\rangle\langle k|.
\end{align*}
For arbitrary $j,k$ with $j\asymp k$, $j\in N_a$,  $k\in N_b$, there is a sequence $j\asymp l_1 \asymp l_2 \dots \asymp l_f \asymp k$ such that $P_{ab,jl_1}, P_{ab,l_1l_2}, \dots P_{ab,l_fk} \ne 0$. For each pair $(l',l")$ in this sequence, we have $H_{l'}-H_{l"} = (h_b-h_a)(\delta_{l', a}-\delta_{l", a})$, where the delta is one if and only if the respective element is in $N_a$. Adding these up, we get $H_j-H_k = (H_j-H_{l_1}) + (H_{l_1}-H_{l_2}) + \dots + (H_{l_f}-H_{k}) = h_b-h_a =: \Delta$, and we have shown the final condition.
\end{proof}

\section{Examples}

\subsection{Single Basin Example}

Let us consider two $n=4$ Lindblad operators:
\begin{align*}
L_1 &= |1\rangle\langle 2| + 2|2\rangle\langle 3| + 3|3\rangle\langle 4| + 4|4\rangle\langle 1| \\
L_2 &= 5|1\rangle\langle 3| + 10|3\rangle\langle 2| + 2|2\rangle\langle 4| + 6|4\rangle\langle 1| \\
\end{align*}
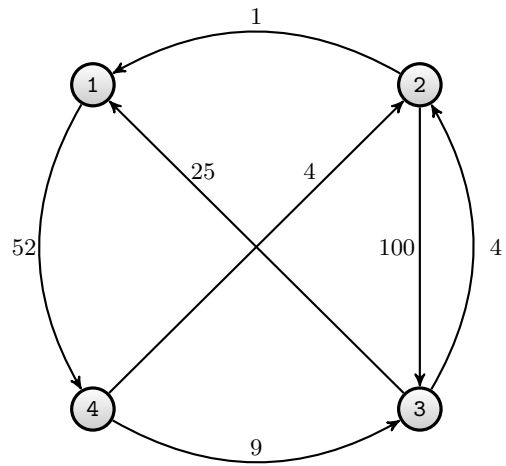
\begin{figure}
	\centering
	\begin{tikzpicture}[->,node distance=3mm,
	vertex/.style={
		circle,minimum size=4mm,rounded corners=2mm,
		very thick,draw=black,
		top color=white,bottom color=black!20,
		font=\ttfamily}, >=stealth]
	
	\node (1) [vertex] {1};
	\node (2) [vertex, right of=1, xshift=40mm]{2};
	\node (3) [vertex, below of=2, yshift=-40mm]{3};
	\node (4) [vertex, left of=3, xshift=-40mm]{4};

	\path  (3) edge[->, >=stealth',thick] node[xshift=-7mm,yshift=10mm]{25} (1)
	(4) edge[->, >=stealth',thick] node[xshift=7mm, yshift=10mm]{4} (2)
	(2) edge[->, >=stealth',thick] node[xshift=-3mm]{100} (3);
	\path  (1) edge[->, >=stealth',thick, bend right] node[xshift=-2mm]{52} (4)
	(4) edge[->, >=stealth',thick, bend right] node[yshift=2mm]{9} (3)
	(3) edge[->, >=stealth',thick, bend right] node[xshift=3mm]{4} (2)
	(2) edge[->, >=stealth',thick, bend right] node[yshift=2mm]{1} (1);

	\end{tikzpicture}
	\caption{$G_\Omega$ for $L_1$ and $L_2$ under the diagonal projectors.}
	\label{ex1GOm}
\end{figure}
$G_\Omega$ is shown in Fig. \ref{ex1GOm}. This graph is strongly connected, and there is clearly no equivalence relation $\sim$ that would satisfy the Lindblad operators. Therefore, there is a single asymptotic state that every initial state will approach (we have not specified the Hamiltonian, but assuming it is diagonal, this will not change the asymptotic state). 

In general, for $n=4$, there are $4^3=64$ trees, with $16$ rooted at each vertex. Our graph has only seven out of twelve possible edges, which gives fifteen total trees (six rooted at one, three each for the other three vertices). This gives the following $w$-products:
\begin{itemize}
\item Rooted at 1: $w_{12}w_{23}w_{34}=36$, $w_{12}w_{23}w_{24}=16$, $w_{12}w_{24}w_{13}=100$, $w_{12}w_{13}w_{34}=225$, $w_{13}w_{32}w_{24}=10,000$, $w_{13}w_{32}w_{34}=22,500$. 
\item Rooted at 2: $w_{23}w_{34}w_{41}=1872$, $w_{24}w_{41}w_{13}=5,200$, $w_{24}w_{41}w_{23}=20,800$.
\item Rooted at 3: $w_{34}w_{41}w_{12}=468$, $w_{34}w_{41}w_{32}=46,800$, $w_{32}w_{24}w_{41}=20,800$.
\item Rooted at 4: $w_{41}w_{12}w_{23}=208$, $w_{41}w_{12}w_{13}=1,300$, $w_{41}w_{13}w_{32}=130,000$.
\end{itemize}
The asymptotic state is then:
\begin{align*}
\frac{1}{260,325}\left(32,877|1\rangle\langle 1| + 27,872|2\rangle\langle 2| + 68,068|3\rangle\langle 3| \right. \\
\left. + 131,508|4\rangle\langle 4| \right)
\end{align*}

\subsection{Forest constraints}

Now consider two $n=8$ Lindblad operators:
\begin{align*}
\sigma_3&:= (12)(34)(5678) \\
D_3&:=diag(2,3,4,5,6,7,8,9) \\
\sigma_4&:= (15)(26)(37)(48) \\
D_4&:=diag(0,0,0,0,10,11,12,13)
\end{align*}
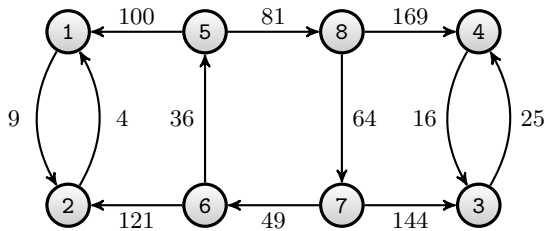
\begin{figure}
	\centering
	\begin{tikzpicture}[->,node distance=3mm,
	vertex/.style={
		circle,minimum size=4mm,rounded corners=2mm,
		very thick,draw=black,
		top color=white,bottom color=black!20,
		font=\ttfamily}, >=stealth]
	
	\node (5) [vertex] {5};
	\node (6) [vertex, below of=5, yshift=-20mm]{6};
	\node (7) [vertex, right of=6, xshift=15mm]{7};
	\node (8) [vertex, above of=7, yshift=20mm]{8};
	\node (1) [vertex, left of=5, xshift=-15mm]{1};
	\node (2) [vertex, left of=6, xshift=-15mm]{2};
	\node (3) [vertex, right of=7, xshift=15mm]{3};
	\node (4) [vertex, right of=8, xshift=15mm]{4};

	\path  (8) edge[->, >=stealth',thick] node[xshift=3mm,yshift=0mm]{64} (7)
	(7) edge[->, >=stealth',thick] node[xshift=0mm, yshift=-2mm]{49} (6)
	(6) edge[->, >=stealth',thick] node[xshift=-3mm, yshift=0mm]{36} (5)
	(5) edge[->, >=stealth',thick] node[yshift=2mm]{81} (8);
	\path  (1) edge[->, >=stealth',thick, bend right] node[xshift=-3mm]{9} (2)
	(2) edge[->, >=stealth',thick, bend right] node[xshift=3mm]{4} (1);
	\path  (4) edge[->, >=stealth',thick, bend right] node[xshift=-3mm]{16} (3)
	(3) edge[->, >=stealth',thick, bend right] node[xshift=3mm]{25} (4);
	\path  (5) edge[->, >=stealth',thick] node[yshift=2mm]{100} (1)
	(6) edge[->, >=stealth',thick] node[yshift=-2mm]{121} (2);
	\path  (7) edge[->, >=stealth',thick] node[yshift=-2mm]{144} (3)
	(8) edge[->, >=stealth',thick] node[yshift=2mm]{169} (4);
	\end{tikzpicture}
	\caption{$G_\Omega$ for $L_3$ and $L_4$ under the diagonal projectors.}
	\label{ex2GOm}
\end{figure}
$G_\Omega$ is shown in Fig. \ref{ex2GOm}. Clearly, $\{1,2\}$ and $\{3,4\}$ are basins, as they are strongly connected with no outgoing edges. $\{5,6,7,8\}$ is also strongly connected but has outgoing edges, and therefore this subgraph corresponds to the decay space. Asymptotically, the system will leave this space.

In each basin, there is a full-rank asymptotic state (again we ignore the Hamiltonian while assuming it is diagonal). There is no symmetry in either basin, so there are no hidden enclosures. Hence the asymptotic state in either basin is unique. These states are:
\begin{align*}
\rho_{12}:= \frac{1}{13}(9|1\rangle\langle 1|+4|2\rangle\langle 2|) \\
\rho_{34}:= \frac{1}{41}(16|3\rangle\langle 3|+25|4\rangle\langle 4|).
\end{align*} 
The coefficients correspond to trees with only one edge. 

While the trees are straightforward, the forests that determine the constraints are somewhat less so. We need to find forests with roots in $\{1,2,3,4\}$. We discard the edges within basins as this speeds up the calculation. There are fifteen such forests: there are $2^4=16$ choices for outgoing edges $\{5,6,7,8\}$, and one of those choices corresponds to the cycle. One example forest is shown in figure 	\ref{ex2GOmfor}. The corresponding $w$-product is $w_{15}w_{56}w_{67}w_{48} = 29,811,600$. The remaining products are $w_{15}w_{26}w_{37}w_{48}$, $w_{15}w_{26}w_{37}w_{78}$, $w_{15}w_{26}w_{67}w_{48}$, $w_{15}w_{56}w_{37}w_{48}$, $w_{85}w_{26}w_{37}w_{48}$, $w_{15}w_{26}w_{67}w_{78}$, $w_{15}w_{56}w_{37}w_{78}$, $w_{85}w_{26}w_{37}w_{78}$, $w_{85}w_{26}w_{67}w_{48}$, $w_{85}w_{56}w_{37}w_{48}$, $w_{15}w_{56}w_{67}w_{78}$, $w_{85}w_{26}w_{67}w_{78}$, $w_{85}w_{56}w_{37}w_{78}$, and $w_{85}w_{56}w_{67}w_{48}$.
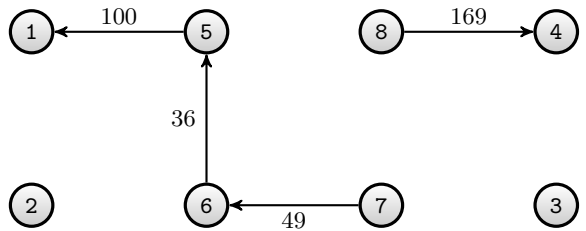
\begin{figure}
	\centering
	\begin{tikzpicture}[->,node distance=3mm,
	vertex/.style={
		circle,minimum size=4mm,rounded corners=2mm,
		very thick,draw=black,
		top color=white,bottom color=black!20,
		font=\ttfamily}, >=stealth]
	
	\node (5) [vertex] {5};
	\node (6) [vertex, below of=5, yshift=-20mm]{6};
	\node (7) [vertex, right of=6, xshift=20mm]{7};
	\node (8) [vertex, above of=7, yshift=20mm]{8};
	\node (1) [vertex, left of=5, xshift=-20mm]{1};
	\node (2) [vertex, left of=6, xshift=-20mm]{2};
	\node (3) [vertex, right of=7, xshift=20mm]{3};
	\node (4) [vertex, right of=8, xshift=20mm]{4};

	\path  (7) edge[->, >=stealth',thick] node[xshift=0mm, yshift=-2mm]{49} (6)
	(6) edge[->, >=stealth',thick] node[xshift=-3mm, yshift=0mm]{36} (5);
	\path  (5) edge[->, >=stealth',thick] node[yshift=2mm]{100} (1);
	\path (8) edge[->, >=stealth',thick] node[yshift=2mm]{169} (4);
	\end{tikzpicture}
	\caption{One forest in the set $T_B(G_\Omega')$, where $\Omega$ is formed from $L_3$ and $L_4$ with diagonal projectors.}
	\label{ex2GOmfor}
\end{figure}
To calculate the constraint for $\kappa_1'$, where $N_1:=\{1,2\}$, we add the forests where $l\in\{5,6,7,8\}$ leads to that basin. For example, there are $3$ forests where $l=8$ leads to node $1$ or $2$:
\begin{align*}
\sum_{\tau\in T_B(G'_\Omega, 1, 8)} W(\tau) &= w_{15}w_{26}w_{67}w_{78} + w_{26}w_{67}w_{78}w_{85}\\ 
&\hspace{.8in} + w_{15}w_{56}w_{67}w_{78} \\
&= 8,718,080.
\end{align*}
The $\kappa'_\eta$ vectors are:
\begin{align*}
\kappa'_1 &= (140073700,140073700,140073700,140073700,\\ & 74395890, 125013820, 31739260, 8718080)^T \\
\kappa'_2 &= (140073700,140073700,140073700,140073700, \\
& 65677810, 26211240, 108334440, 111443300)^T 
\end{align*}

\subsection{Hidden Enclosures}
Now let us consider a system with hidden enclosures. Let us define two $n=9$ Lindblad operators:
\begin{align*}
\sigma_5&:= (147)(258)(369) \\
D_{5} &:= diag(4e^{i\theta_{5,1}},5e^{i\theta_{5,2}},6e^{i\theta_{5,3}},4e^{i\theta_{5,4}},5e^{i\theta_{5,5}},6e^{i\theta_{5,6}},\\
&\hspace{1in} 4e^{i\theta_{5,7}},5e^{i\theta_{5,8}},6e^{i\theta_{5,9}})\\
\sigma_6&:= (123)(456)(789) \\
D_{6} &:= diag(e^{i\theta_{6,1}},2e^{i\theta_{6,2}},3e^{i\theta_{6,3}},e^{i\theta_{6,4}},2e^{i\theta_{6,5}},3e^{i\theta_{6,6}}, \\
&\hspace{1in}e^{i\theta_{6,7}},2e^{i\theta_{6,8}},3e^{i\theta_{6,9}})
\end{align*}

\begin{figure}
	\centering
	\begin{tikzpicture}[->,node distance=3mm,
	vertex/.style={
		circle,minimum size=4mm,rounded corners=2mm,
		very thick,draw=black,
		top color=white,bottom color=black!20,
		font=\ttfamily}, >=stealth]
	
	\node (1) [vertex] {1};
	\node (2) [vertex, right of=1, xshift=20mm]{2};
	\node (3) [vertex, below right of=1, xshift=10mm, yshift=-10mm]{3};

	\node (4) [vertex, right of=2, xshift=20mm]{4};
	\node (5) [vertex, right of=4, xshift=20mm]{5};
	\node (6) [vertex, below right of=4, xshift=10mm, yshift=-10mm]{6};

	\node (7) [vertex, below of=2, yshift=-35mm]{7};
	\node (8) [vertex, below of=4, yshift=-35mm]{8};
	\node (9) [vertex, above right of=7, xshift=10mm, yshift=10mm]{9};

	\path  (3) edge[->, >=stealth',thick] node[xshift=-2mm]{9} (2)
	(2) edge[->, >=stealth',thick] node[yshift=2mm]{4} (1)
	(1) edge[->, >=stealth',thick] node[xshift=3mm]{1} (3);

	\path  (6) edge[->, >=stealth',thick] node[xshift=-1mm,yshift=1mm]{9} (5)
	(5) edge[->, >=stealth',thick] node[yshift=2mm]{4}(4)
	(4) edge[->, >=stealth',thick] node[xshift=1mm,yshift=1mm]{1}(6);

	\path  (9) edge[->, >=stealth',thick] node[xshift=-1mm,yshift=-1mm]{9} (8)
	(8) edge[->, >=stealth',thick] node[yshift=-2mm]{4} (7)
	(7) edge[->, >=stealth',thick] node[xshift=1mm,yshift=-1mm]{1}(9);

	\path  (7) edge[->, >=stealth',thick, out=80, in=210] 	node[xshift=1mm,yshift=-4mm]{16} (4)
	(4) edge[->, >=stealth',thick, bend right]	node[yshift=2mm]{16} (1)
	(1) edge[->, >=stealth',thick]	node[xshift=-2mm, yshift=-2mm]{16} (7);

	\path  (8) edge[->, >=stealth',thick] node[xshift=1mm,yshift=-2mm]{25} (5)
	(5) edge[->, >=stealth',thick, bend right]node[yshift=2mm]{25} (2)
	(2) edge[->, >=stealth',thick, out=-30, in=100] node[xshift=-1mm,yshift=-4mm]{25} (8);

	\path  (9) edge[->, >=stealth',thick, bend right=8mm] node[xshift=1mm,yshift=-2mm]{36} (6)
	(6) edge[->, >=stealth',thick] node[yshift=-2mm]{36} (3)
	(3) edge[->, >=stealth',thick, bend right=8mm]node[xshift=-4mm,yshift=0mm]{36} (9);
	\end{tikzpicture}
	\caption{$G_\Omega$ for $L_5$ and $L_6$ under the diagonal projectors.}
	\label{ex3GOm}
\end{figure}
$G_\Omega$ is shown in Fig. \ref{ex3GOm}. Clearly, the entire graph is strongly connected, so there no decay space. This graph has one basin, but does that mean there is only one enclosure for the entire system? 

The Lindblad operators have a symmetry, and there is an equivalence relation $1\sim 4\sim 7 \not\sim 2\sim 5\sim 8\not\sim 3\sim 6 \sim 9\not\sim 1$. First of all, $|D_{\alpha,1+\delta}|=|D_{\alpha,4+\delta}|=|D_{\alpha,7+\delta}|$, for $\alpha=1,2$ and $\delta=0,1,2$. Both permutations also preserve the equivalence structure, but not in the same manner. $\sigma_6$ cycles between the three classes, while $\sigma_5/\sim$ is the identity permutation on the quotient space. 

So there will be a hidden enclosure if we can satisfy Hamiltonian degeneracy and dissipation resonance. Let us suppose we have the following Hamiltonian, which has the appropriate degeneracy:
\begin{align*}
H = |1\rangle \langle 1|  + |4\rangle \langle 4|+ |7\rangle \langle 7| - |3\rangle \langle 3|-|6\rangle \langle 6|-|9\rangle \langle 9|.
\end{align*}

Now let us consider resonance. One component of $G_\sim$ is shown in Fig. \ref{ex3Gsim} (weights not shown, to reduce clutter). There are two components total, but the other is the image of the first under exchange of indices. If one is resonant, so is the other, hence we only consider the first component. 

Note that both operators are required for connectivity. Each one by itself would give three smaller components of three nodes each, and the requirement for resonance would be looser. In fact, because $L_5$ maps equivalence classes to themselves, the resonance condition is automatically satisfied. It is only because $L_6$ connects each of its components that there are conditions imposed on its phases.

There are nine nodes, so we need eight degrees of freedom to construct the function $f_C$:
\begin{align*}
f_C((1,4), (7,1)) &= \exp(i(\theta_{5,1}-\theta_{5,4}))\\
f_C((7,1), (4,7)) &= \exp(i(\theta_{5,7}-\theta_{5,1}))\\
f_C((4,7), (6,9)) &= \exp(i(\theta_{6,4}-\theta_{6,7}))\\
f_C((6,9), (3,6)) &= \exp(i(\theta_{5,6}-\theta_{5,9}))\\
f_C((3,6), (9,3)) &= \exp(i(\theta_{5,3}-\theta_{5,6}))\\
f_C((9,3), (8,2)) &= \exp(i(\theta_{6,9}-\theta_{6,3}))\\
f_C((8,2), (5,8)) &= \exp(i(\theta_{5,8}-\theta_{5,2}))\\
f_C((5,8), (2,5)) &= \exp(i(\theta_{5,5}-\theta_{5,8})).
\end{align*}
All other function values are fixed via transitivity. To construct $f_C$, we have used six solid edges, and two dashed edges. There are three other solid edges, but they automatically satisfy resonance. We have seven remaining edges to check. For example:
\begin{align*}
&\exp(i(\theta_{6,1}-\theta_{6,4})) = f_C((1,4), (3,6)) \\
& \hspace{.5in} = f_C((1,4), (7,1))f_C((7,1), (4,7)) \\
& \hspace{.8in} \times f_C((4,7), (6,9))f_C((6,9), (3,6)) \\
& \hspace{.5in} = \exp(i(\theta_{5,7}+\theta_{6,4}+\theta_{5,6}-\theta_{5,4}-\theta_{6,7}-\theta_{5,9})).
\end{align*}
We can write down seven linear equations that the phases must satisfy (all equations $\mod 2\pi$):
\begin{align*}
\theta_{6,1}-\theta_{6,4} &= \theta_{5,7}+\theta_{6,4}+\theta_{5,6}-\theta_{5,4}-\theta_{6,7}-\theta_{5,9} \\
\theta_{6,3}-\theta_{6,6} &= \theta_{5,3} + \theta_{6,9}  + \theta_{5,5} -\theta_{5,6}-\theta_{6,3}-\theta_{5,2} \\
\theta_{6,2}-\theta_{6,5} &= \theta_{6,4}+\theta_{6,6}-\theta_{6,1}-\theta_{6,3} \\
\theta_{6,6}-\theta_{6,9} &= \theta_{5,6} + \theta_{6,3} + \theta_{5,8} - \theta_{5,9} - \theta_{6,6} - \theta_{5,5}  \\
\theta_{6,5}-\theta_{6,8} &= \theta_{5,5} + \theta_{6,2} + \theta_{5,7} - \theta_{5,8} - \theta_{6,5} - \theta_{5,4} \\
\theta_{6,8}-\theta_{6,2} &= \theta_{5,8} + \theta_{6,5} + \theta_{5,1} - \theta_{5,2} - \theta_{6,8} - \theta_{5,7}  \\
\theta_{6,7}-\theta_{6,1} &= \theta_{5,7} + \theta_{6,4} + \theta_{5,3} - \theta_{5,1} - \theta_{6,7} - \theta_{5,9}.
\end{align*}
There are seven equations, but only six degrees of freedom, since we can add a constant to the phases $\theta_{6,j}$ and get another solution. There is a further redundancy due to the fact these are linear equations $\mod 2\pi$. If we have a solution for $\theta_{5,j}$ and $\theta_{6,j}$, we can find another solution by adding $\frac{2\pi k}{3}$ to $\theta_{6,4}$, $\theta_{6,5}$, and $\theta_{6,6}$, and subtracting $\frac{2\pi k}{3}$ from $\theta_{6,7}$, $\theta_{6,8}$, and  $\theta_{6,9}$, for $k=1$ or $2$. 

\begin{figure}
	\centering
	\begin{tikzpicture}[node distance=3mm,
	vertex/.style={
		circle,minimum size=2mm,rounded corners=1mm,
		very thick,draw=black,
		top color=white,bottom color=black!20,
		font=\ttfamily}, >=stealth]
	
	\node (1) [vertex]{(1,4)};
	\node (2) [vertex, right of=1, xshift=70mm]{(3,6)};
	\node (3) [vertex, below left of=2, xshift=-35mm, yshift=-70mm]{(2,5)};
	\node (4) [vertex, above right of=3, xshift=10mm, yshift=15mm]{(8,2)};
	\node (5) [vertex, above left of=3, xshift=-10mm, yshift=15mm]{(5,8)};
	\node (6) [vertex, below right of=1, xshift=17.5mm, yshift=-2.5mm]{(4,7)};
	\node (7) [vertex, below left of=6, xshift=-10mm, yshift=-10mm]{(7,1)};
	\node (8) [vertex, below left of=2, xshift=-17.5mm, yshift=-2.5mm]{(9,3)};
	\node (9) [vertex, below right of=8, xshift=10mm, yshift=-10mm]{(6,9)};
	
	\path  (1) edge[->, >=stealth',thick, dashed, bend left] (2)
	(2) edge[->, >=stealth',thick, dashed, bend left] (3)
	(3) edge[->, >=stealth',thick, dashed, bend left] (1);

	\path  (4) edge[->, >=stealth',thick, dashed] (7)
		(7) edge[->, >=stealth',thick, dashed] (8)
		(8) edge[->, >=stealth',thick, dashed] (4);
	
	\path  (6) edge[->, >=stealth',thick, dashed] (9)
		(9) edge[->, >=stealth',thick, dashed] (5)
		(5) edge[->, >=stealth',thick, dashed] (6);
	
	\path  (3) edge[->, >=stealth',thick] (4)
		(4) edge[->, >=stealth',thick] (5)
		(5) edge[->, >=stealth',thick] (3);

	\path  (1) edge[->, >=stealth',thick] (7)
		(7) edge[->, >=stealth',thick] (6)
		(6) edge[->, >=stealth',thick] (1);

	\path  (2) edge[->, >=stealth',thick] (8)
		(8) edge[->, >=stealth',thick] (9)
		(9) edge[->, >=stealth',thick] (2);
	
	\end{tikzpicture}
	\caption{One component of $G_\sim$ for the operators $L_5$ and $L_6$. The solid edges pertain to $L_5$ and the dashed to $L_6$. Weights not shown.}
	\label{ex3Gsim}
\end{figure}

Define the following function $g:N_C' \rightarrow U(1)$, where $N_C':= N_C \cup \{(j,j): j\in \mathbb{Z}_9\}$:
\begin{align*}
g(j,j) & := 1\\
g(j,\sigma_5(j))&:= \exp\left(i\left(-\theta_{5,j}+\frac{1}{3}\sum_{k=0}^2 \theta_{5,\sigma_5^k(j)} \right)\right) \\
g(\sigma_5(j),j)&:= g(j,\sigma_5(j))^*  \\
j&:= 1,\dots,9
\end{align*}
Note that this function is transitive in the same sense as $f_C$: for any $j\sim k \sim l$, $g(j,k)g(k,l)=g(j,l)$.

Now define a basis of nine vectors, for $j=1,2,3$, $k=0,1,2$:
\begin{align*}
|j,k\rangle &:= \frac{1}{\sqrt{3}}\sum_{l=0}^2 g(j+3l,j)e^{2\pi ikl/3}|j+3l\rangle.
\end{align*}
This is an orthonormal basis:
\begin{align*}
\langle j, k|j',k'\rangle &= \frac{1}{3}\sum_{l,l'=0}^2 g(j+3l,j)^*g(j'+3l',j') \times \\
& \hspace{0.5in}e^{2\pi i(-kl+k'l')/3}\langle j+3l|j'+3l'\rangle \\
&= \frac{1}{3}\delta_{jj'} \sum_{l=0}^2 g(j+3l,j)^*g(j+3l,j)\times \\
& \hspace{0.5in}e^{2\pi i(-k+k')l/3} \\
&= \frac{1}{3}\delta_{jj'} \sum_{l=0}^2 e^{2\pi i(-k+k')l/3} \\&= \delta_{jj'}\delta_{kk'}.
\end{align*}
Furthermore in this basis, we have the following:
\begin{align*}
H &= \sum_{k=0}^2\left(|1,k\rangle\langle 1,k| - |3,k\rangle\langle 3,k|\right) \\
L_5 &= \sum_{k=0}^2 e^{2\pi ik/3} \left(4e^{i(\theta_{5,1}+\theta_{5,4}+\theta_{5,7})} |1,k\rangle\langle 1,k|  \right.\\
&\hspace{0.8in}+5e^{i(\theta_{5,2}+\theta_{5,5}+\theta_{5,8})} |2,k\rangle\langle 2,k| \\
&\hspace{0.8in}+6e^{i(\theta_{5,3}+\theta_{5,6}+\theta_{5,9})} |3,k\rangle\langle 3,k| \Big) \\
L_6 &= 2e^{i\theta_{6,2}} \left( |1,0\rangle\langle 2,1|+|1,1\rangle\langle 2,2|+|1,2\rangle\langle 2,0|\right) \\
&+ 3e^{i\theta_{6,3}}\left( |2,0\rangle\langle 3,1|+|2,1\rangle\langle 3,2|+|2,2\rangle\langle 3,0|\right) \\
&+ e^{i\theta_{6,1}}\left( |3,0\rangle\langle 1,1|+|3,1\rangle\langle 1,2|+|3,2\rangle\langle 1,0|\right).
\end{align*}
In this basis, we have a GPM-system: the Hamiltonian is diagonal, and the Lindblad operators are GPM's. However, now there are three basins instead of one (see Fig. \ref{ex3GOmBeta}). 

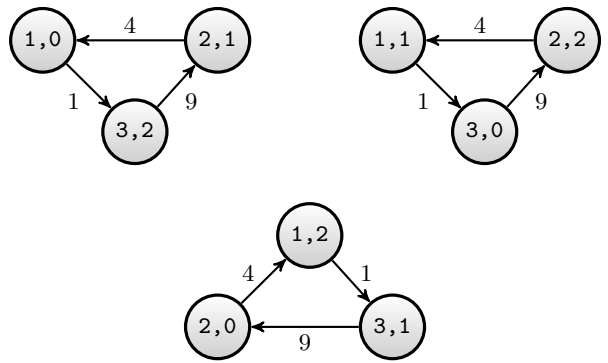
\begin{figure}
	\centering
	\begin{tikzpicture}[->,node distance=3mm,
	vertex/.style={
		circle,minimum size=4mm,rounded corners=2mm,
		very thick,draw=black,
		top color=white,bottom color=black!20,
		font=\ttfamily}, >=stealth]
	
	\node (1) [vertex] {1,0};
	\node (2) [vertex, right of=1, xshift=20mm]{2,1};
	\node (3) [vertex, below right of=1, xshift=10mm, yshift=-10mm]{3,2};

	\node (4) [vertex, right of=2, xshift=20mm]{1,1};
	\node (5) [vertex, right of=4, xshift=20mm]{2,2};
	\node (6) [vertex, below right of=4, xshift=10mm, yshift=-10mm]{3,0};

	\node (7) [vertex, below of=2, yshift=-35mm]{2,0};
	\node (8) [vertex, below of=4, yshift=-35mm]{3,1};
	\node (9) [vertex, above right of=7, xshift=10mm, yshift=10mm]{1,2};

	\path  (3) edge[->, >=stealth',thick] node[xshift=2mm,yshift=-2mm]{9} (2)
	(2) edge[->, >=stealth',thick] node[yshift=2mm]{4} (1)
	(1) edge[->, >=stealth',thick] node[xshift=-2mm,yshift=-2mm]{1} (3);

	\path  (6) edge[->, >=stealth',thick] node[xshift=2mm,yshift=-2mm]{9} (5)
	(5) edge[->, >=stealth',thick] node[yshift=2mm]{4} (4)
	(4) edge[->, >=stealth',thick] node[xshift=-2mm,yshift=-2mm]{1} (6);

	\path  (9) edge[->, >=stealth',thick] node[xshift=2mm,yshift=1mm]{1} (8)
	(8) edge[->, >=stealth',thick] node[yshift=-2mm]{9} (7)
	(7) edge[->, >=stealth',thick] node[xshift=-2mm,yshift=1mm]{4} (9);

	\end{tikzpicture}
	\caption{$G_\Omega$ for $L_5$ and $L_6$ after transforming to the $|j,k\rangle$ basis.}
	\label{ex3GOmBeta}
\end{figure}

In the transformed system, one cannot find hidden enclosures. There is no symmetry inside the basins, since the edge weights are all different. There is a symmetry between the basins, but the existence of $L_5$ ruins the resonance of the new $G_\sim$. Fig. \ref{ex3GsimBeta} shows one component of the new $G_\sim$. There are now three nodes; moreover, each node has a self-edge with a weight that is not one. It follows that this component cannot be resonant, since a transitive function must be one when the two input nodes are identical. The other five components of $G_\sim$ also have self-edges where the weight are $e^{\pm 2\pi i/3}$. 

\begin{figure}
	\centering
	\begin{tikzpicture}[node distance=3mm,
	vertex/.style={
		circle,minimum size=2mm,rounded corners=1mm,
		very thick,draw=black,
		top color=white,bottom color=black!20,
		font=\ttfamily}, >=stealth]

	\node (1) [vertex]{(1,0),(2,1)};
	
	\node (2) [vertex, below right of=1, xshift=30mm, yshift=-40mm]{(3,2),(1,0)};
	\node (3) [vertex, left of=2, xshift=-60mm]{(2,1),(3,2)};
	
	\path  (1) edge[->, >=stealth',thick, dashed, bend left] (2)
	(2) edge[->, >=stealth',thick, dashed, bend left] (3)
	(3) edge[->, >=stealth',thick, dashed, bend left] (1);

	\path  (1) edge[->, >=stealth',thick, loop above] node[yshift=1mm]{$e^{-2\pi i/3}$} (1);
	\path  (2) edge[->, >=stealth',thick, loop below]node[yshift=-1mm]{$e^{-2\pi i/3}$} (2);
	\path  (3) edge[->, >=stealth',thick, loop below]node[yshift=-1mm]{$e^{-2\pi i/3}$} (3);

	\end{tikzpicture}
	\caption{One component of $G_\sim$ for the operators $L_5$ and $L_6$, after transforming to the $|j,k\rangle$ basis. The solid edges pertain to $L_5$ and the dashed to $L_6$. Weights not shown.}
	\label{ex3GsimBeta}
\end{figure}
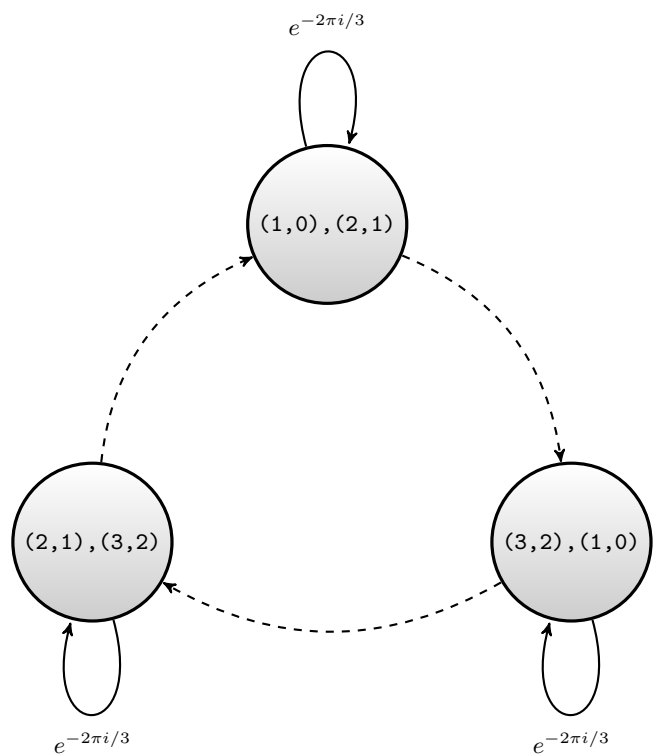
So we can conclude that each basin in the new $G_\Omega$ is a minimal enclosure, and that any asymptotic state must be a linear superposition of the unique dissipative states for each basin. The dissipative states for each basin will have the same co-efficients. There are only three trees for each basin, one rooted at each node. These trees will have weights $4$, $9$, and  $36$.

So, the system must approach one of the following states:
\begin{align*}
\rho &= \frac{c_1}{49}\left(36|1,0\rangle\langle 1,0|+9|2,1\rangle\langle 2,1|+4|3,2\rangle\langle 3,2| \right) \\
&+\frac{c_2}{49}\left(36|1,1\rangle\langle 1,1|+9|2,2\rangle\langle 2,2|+4|3,0\rangle\langle 3,0| \right)\\
 &+ \frac{c_3}{49}\left(36|1,2\rangle\langle 1,2|+9|2,0\rangle\langle 2,0|+4|3,1\rangle\langle 3,1| \right) \\
 \sum_{j=1}^3 c_j&=1.
\end{align*}


\section{Conclusions}

We have derived combinatoric formulas for stationary orbits of the quantum density operator under the influence of the Lindblad super-operator. These formulas relate Lindblad operators to trees and forests on a directed graph $G_\Omega$. Each node on the graph represents a (one-dimensional) eigenspace of the density operator, and the edge weights represent (modulus-squared) off-diagonal matrix elements of the Lindblad operators. The stationary orbits have non-zero eigenvalues on the basins of the graph (the strongly-connected components that have no outgoing edges). Each basin has a unique stationary orbit whose coefficients are derived from the trees in the basin rooted at the relevant node.

The piece of the density operator outside of the basins will vanish if the relevant eigenprojectors are held stationary. In this case, one can write down linear constraints which can be used to calculate the asymptotic end-state given an initial state. These constraints have coefficients that depend on the forests on graph $G_\Omega$, where the trees in the forests have root nodes in the basins.

The interplay between projectors and orbits is thorny for the general case: finding stationary eigenprojectors can be difficult. However, we have examined a class of Lindblad operators that take the form of generalized permutation matrices. The diagonal projectors in these systems are stationary, and therefore our tree and forest formulas can be used to describe the asymptotic end-states of the system. The diagonal projectors are typically the only stationary projectors, but if there is a particular symmetry on the Lindblad and Hamiltonian operators, there may be other end-states: either hidden enclosures, or oscillations between enclosures. This symmetry is described by an equivalence relation on a subset of the nodes of $G_\Omega$, where the Hamiltonian amplitudes are symmetric, the Lindblad amplitudes are symmetric, and there is a resonance between the phases of the Lindblad operators.


\begin{acknowledgments}
The work of P. Rooney was supported by Discovery Grant \#311880 of the Natural Science and Engineering Research Council of Canada, and the DFG Grant HE 1858/13-1 from the German Research Foundation (DFG).  The work of A. M. Bloch was supported by grants NSF DMS 1613819 and AFOSR 9550-18-0028. The work of C. Rangan was supported by Discovery Grant \#311880 of the Natural Science and Engineering Research Council of Canada.
\end{acknowledgments}

\bibliographystyle{unsrt}
\bibliography{QCbiblio}

\end{document}